\newtheorem{thm}{Theorem}[section]
\newtheorem{lem}[thm]{Lemma}
\newcommand\numberthis{\addtocounter{equation}{1}\tag{\theequation}}
\newcommand{\be}{\begin{equation}}
\newcommand{\ee}{\end{equation}}
\newcommand{\mh}{\mathcal{H}}
\def\HS{\mathrm{HS}}
\def\h{\mathcal{H}}
\def\mf{\mathcal{F}}
\def\P{\mathbb{P}}
\def\E{\mathbb{E}}
\begin{document}

\title{Ultrahigh Dimensional Feature Selection via Kernel Canonical Correlation Analysis}

\author{\name Tianqi Liu \email tianqi.liu@yale.edu \\
       \addr Department of Biostatistics\\
       Yale University\\
       New Haven, CT 06511, USA
       \AND
       \name Kuang-Yao Lee \email kuang-yao.lee@yale.edu \\
       \addr Department of Biostatistics\\
       Yale University\\
       New Haven, CT 06511, USA
       \AND
       \name Hongyu Zhao \email hongyu.zhao@yale.edu \\
       \addr Department of Biostatistics\\
       Yale University\\
       New Haven, CT 06511, USA}

\editor{Kevin Murphy, Bernhard Sch\"{o}lkopf}

\maketitle

\begin{abstract}
High-dimensional variable selection is important in many scientific fields, such as genomics. In this paper, we develop a Sure Independence feature Screening procedure based on Kernel Canonical Correlation Analysis (KCCA-SIS, for short).  No model assumption is needed between
response and predictors to apply KCCA-SIS and it can be used in ultrahigh dimensional
data analysis. Compared to the original SIS {\citep{fan2008sure}}, KCCA-SIS can handle nonlinear dependencies among variables. Compared to Distance Correlation-SIS {\citep{li2012feature}}, KCCA-SIS is scale free, distribution free and has better approximation results based on the universal characteristic of Gaussian Kernel {\citep{micchelli2006universal}}. KCCA-SIS encompasses SIS and DC-SIS in the sense that SIS and DC-SIS correspond to specific kernel choices under KCCA-SIS. Compared to sup-HSIC-SIS {\citep{balasubramanian2013ultrahigh}}, KCCA-SIS is scale-free removing the marginal variation of features and response variables. Similar to DC-SIS and sup-HSIC-SIS, KCCA-SIS can also be used directly to screen grouped predictors and handle multivariate response variables. We show that KCCA-SIS has the sure screening property, and has better performance through simulation studies and its application to a brain gene expression dataset. 
\end{abstract}

\begin{keywords}
{Sure independence screening, Kernel canonical correlation analysis, Model-free, Reproducing Kernel Hilbert Space, Human brain gene expression}
\end{keywords}

\section{Introduction}

Ultrahigh dimensional data sets have become common in  many disciplines. For example, the reducing cost in microarrays and sequencing allows researchers to collect information on gene expression and sequence data at the whole genome level. A typical study may generate expression information from tens of thousands of genes (denoted as $p$) across dozens to hundreds of subjects (denoted as $n$). Feature screening is important in genetics/genomics studies to identify disease genes, construct gene networks, and develop biomarkers. Various regularization methods have been proposed and their statistical properties studied for these high dimensional problems, such as: Lasso \citep{tibshirani1996regression}, Dantzig selector \citep{candes2007dantzig}, SCAD \citep{fan2001variable}, and MCP \citep{zhang2010nearly}. All of these methods allow the number of selected predictors to be larger than sample size. 

However, the above mentioned methods may not perform well for ultrahigh dimensional data due to the simultaneous challenges in computational efficiency, statistical consistency and algorithmic robustness (\citet{zhao2006model}, \citet{fan2009ultrahigh}, \citet{fan2010selective}). In order to tackle these difficulties, \citep{fan2008sure} proposed the Sure Independence Screening (SIS) and showed that the Pearson correlation ranking procedure possesses a sure screening property for linear regressions with Gaussian predictors and responses. Since the publication of SIS, several extensions were made to consider generalized linear models \citep{fan2009ultrahigh} and nonparametric independence screening in sparse ultrahigh dimensional additive models \citep{fan2011nonparametric}. \citet{ji2012ups} further proposed a two-stage method called UPS: screening by univariate thresholding and cleaning by penalized least squares for selecting variables. \citet{li2012feature} proposed DC-SIS, a sure independence screening model-free method based on distance correlation as a measure of relationship between response and covariate. \citet{song2012feature} proposed a method based on Hilbert--Schmidt Independence Criterion (HSIC, for short). To generalized the idea of DC-SIS, \citet{balasubramanian2013ultrahigh} proposed a general framework, called {sup}-HSIC-SIS, for model-free and multi-output screening. Motivated from the equivalence between distance covariance and HSIC \citep{sejdinovic2013equivalence}, they used Reproducing Kernel Hilbert Space (RKHS) based independence measures \citep{gretton2005measuring}. 

In this paper, we propose a new method called Kernel Canonical Correlation Analysis (KCCA)-SIS, which removes the marginal effect of variables compared to sup-HSIC-SIS and DC-SIS. HSIC  calculates the maximum covariance between the transformations of two random variables restricted in certain function classes, while KCCA calculates the maximum correlation between the transformed ones by removing the marginal variations of random variables. KCCA (\citet{akaho2006kernel}, \citet{melzer2001nonlinear}, \citet{bach2003kernel}) was first proposed as a nonlinear extension of canonical correlation aiming to extract the shared information between two random variables, i.e., to provide nonlinear mappings $f\in \mathcal{H}_X$ and $g\in\mathcal{H}_Y$ so that ${\mathrm{cor}[f(X),g(Y)]}$ is maximized. It was shown in \citet{fukumizu2007statistical} that the maximum of the objective function in KCCA is identical to the operator norm of the correlation operator between $\mathcal{H}_X$ and $\mathcal{H}_Y$. This fact motivates us to use the operator norm of the correlation operator as a measure for the relationship between random variables. We show that KCCA-SIS enjoys the sure screening property under mild conditions. In both simulations and a real data application for extracting interneuron related genes in the human brain, we show that the proposed method performs better than the existing approaches.

The rest of this paper is organized as follows. In Section 2, we develop the KCCA-SIS for feature screening and establish its sure screening property. In Section 3, we compare the proposed method with other approaches on simulated and real data. We conclude this paper with a brief discussion in Section 4. All technical proofs are given in the Appendix.

\def\HSIC{\mathrm{HSIC}}

\section{Independence screening using Kernel CCA}
\def\to{\rightarrow}
\subsection{Some Preliminaries}
Let $(\mathcal{X},\mathcal{B}_{\mathcal{X}})$ and $(\mathcal{Y},\mathcal{B}_{\mathcal{Y}})$ denote Borel measurable spaces. For example, they can be $\mathbb{R}^d$ or any topological Borel measurable spaces. Given positive definite kernels $k_x$ and $k_y$, let $(\mathcal{H}_X,k_x)$ and $(\mathcal{H}_Y,k_y)$ be RKHSs \citep{aronszajn1950theory} of functions on $\mathcal{X}$ and $\mathcal{Y}$, respectively. We denote the marginal distributions of $X$ and $Y$ as $\P_X$ and $\P_Y$, and their joint distribution as $\P_{XY}$. We denote the expectation operator associated with $\P_X$, $\P_Y$, and $\P_{XY}$ as $\E_X$, $\E_Y$, and $\E_{XY}$, respectively. For a random variable $X:\Omega\rightarrow \mathcal{X}$, the \emph{mean element} $m_X\in \mathcal{H}_X$ is induced by the relation, for all $f\in \mathcal{H}_X$
\begin{eqnarray*}
\langle f,m_X\rangle_{\mathcal{H}_X} = \E_X[\langle k_x(\cdot,X),f \rangle] = \E_X f(X),
\end{eqnarray*}
where $\langle \cdot,\cdot \rangle_{\mathcal{H}_X} $ denotes the inner product under ${\mathcal{H}_X}$. By the Riesz representation theorem \citep{reed1980methods}, there exists an operator {$\Sigma_{YX}:\mathcal{H}_X \to \mathcal{H}_Y$} so that
$$
\langle g,\Sigma_{YX}f\rangle_{\mathcal{H}_Y} = \E_{XY}[\langle f,k_x(\cdot,X)-m_X\rangle_{\mathcal{H}_X}\langle k_y(\cdot,Y)-m_Y,g\rangle_{\mathcal{H_Y}}] = \mathrm{Cov}(f(X),g(Y))
$$
holds for all $f\in \mathcal{H}_X$ and $g\in \mathcal{H}_Y$. We call this operator \emph{cross-covariance operator} \citep{fukumizu2009kernel}. If $Y$ is equal to $X$, the positive self-adjoint operator $\Sigma_{XX}$ is called the \emph{covariance operator}. \citet[Theorem 1]{baker1973joint} showed that $\Sigma_{YX}$ can be expressed as
\be
\Sigma_{YX} = \Sigma_{YY}^{1/2}\mathcal{R}_{YX}\Sigma_{XX}^{1/2},
\ee
where $\mathcal{R}_{YX}:\mathcal{H}_X\rightarrow \mathcal{H}_Y$ is a unique bounded operator such that $||\mathcal{R}_{YX}||\leq 1$. We call $\mathcal{R}_{YX}$ the \emph{correlation operator} from $\mathcal{H}_{X}$ to $\mathcal{H}_{Y}$, capturing all the nonlinear information between $X$ and $Y$. On the other hand, assuming $k:(\mathcal{X}\times\mathcal{Y})^2\rightarrow\mathbb{R}$ to be separable, i.e., $k((x,y),(x',y')) = k_x(x,x')k_y(y,y')$, where $k_x:\mathcal{X}^2\rightarrow \mathbb{R}$ and $k_y:\mathcal{Y}^2\rightarrow \mathbb{R}$ are reproducing kernels of $\mathcal{H}_X$ and $\mathcal{H}_Y$ respectively (in which case $\mathcal{H}$ is homomorphism to the tensor product of $\mathcal{H}_X$ and $\mathcal{H}_Y$. i.e., $\mathcal{H}\cong \mathcal{H}_X\otimes\mathcal{H}_Y$), the Hilbert--Schmidt independence criterion (HSIC) is defined as ${\mathrm{HSIC}}(\P_{XY},\mathcal{H}_X,\mathcal{H}_Y):=||\Sigma_{XY}||_{\HS}^2$, where $||\cdot||_{{\HS}}$ denotes the Hilbert--Schmidt norm of the operator. HSIC was first introduced by \citet{gretton2005measuring} and the authors showed that it can be represented as:
\begin{align*}
{\HSIC}(\P_{XY},\mathcal{H}_X,\mathcal{H}_Y)=&\E_{XX'YY'}[k_x(X,X')k_y(Y,Y')] + \E_{XX'}[k_x(X,X')]\E_{YY'}[k_y(Y,Y')]\\
& - 2\E_{XY}[\E_{X'}[k_x(X,X')]\E_{Y'}[k_y(Y,Y')]],
\end{align*}
where $(X',Y')$ are an independent copy of $(X,Y)$ and $\E_{XX'YY'}$ denotes the expectation over the independent pairs. Under the condition that $k_x$ and $k_y$ are characteristic \citep{fukumizu2007kernel}, ${\HSIC}(\P_{XY},\mathcal{H}_X,\mathcal{H}_Y)$ is zero iff $X$ and $Y$ are independent. From this, we know that $||\Sigma_{YX}|| = 0$ iff $X$ and $Y$ are independent, where $||\cdot||$ denotes the operator norm. Furthermore, it is easy to show that $||\mathcal{R}_{YX}|| = 0$ iff $X$ and $Y$ are independent \citep{fukumizu2007kernel}. 

With a slight abuse of notation, we write $\mathcal{R}_{YX} = \Sigma_{YY}^{-1/2}\Sigma_{YX}\Sigma_{XX}^{-1/2}$, where $\Sigma_{YY}$ and $\Sigma_{XX}$ may not be invertible. We define the regularized version of $\mathcal{R}_{YX}$ as 
$$
\mathcal{R}_{YX}(\epsilon_n) {\triangleq} (\Sigma_{YY} + \epsilon_nI)^{-1/2}\Sigma_{YX}(\Sigma_{XX} + \epsilon_nI)^{-1/2},
$$
where $\epsilon_n>0$ is the ridge parameter
\citet[][Lemma 7]{fukumizu2007statistical} showed that if ${\mathcal{R}}_{YX}$ is compact,
\begin{align*}
{||\mathcal{R}_{YX}(\epsilon_n)-{\mathcal{R}}_{YX}||\rightarrow 0, \ \mbox{as $\epsilon_n\rightarrow 0$}.}
\end{align*}

Next we derive a sample level estimator of $\mathcal{R}_{YX}(\epsilon_n)$. {Suppose $\{(X^{(i)},Y^{(i)})\}_{i=1}^n$ is a set of $n$ independent copies from $(X,Y)$. Then} the \emph{empirical cross-covariance operator} $\hat{\Sigma}_{YX}^{(n)}$ is defined as the cross-covariance operator under the empirical distribution $\frac{1}{n}\sum_{i=1}^n\delta_{X^{(i)}}\delta_{Y^{(i)}}$, where $\delta_{X^{(i)}}$ and $\delta_{Y^{(i)}}$ are Dirac measures with point mass on $X^{(i)}$ and $Y^{(i)}$. That is, for any $f\in \mathcal{H}_X$ and $g\in \mathcal{H}_Y$, $\hat{\Sigma}_{YX}^{(n)}$ satisfies
\begin{align*}
\langle g,\hat{\Sigma}_{YX}^{(n)}f\rangle_{\mathcal{H}_Y} = {\mathrm{Cov}}_n[f(X),g(Y)],
\end{align*}
where ${\mathrm{Cov}}_n(X,Y)$ is the empirical covariance between two random variables with respect to the empirical measure. We can similarly define $\hat{\Sigma}_{YY}^{(n)}$ and $\hat{\Sigma}_{XX}^{(n)}$. We then have the regularized estimator of $\mathcal{R}_{YX}$: 
\begin{align*}
&\hat{\mathcal{R}}_{YX}^{(n)}(\epsilon_n) \triangleq (\hat{\Sigma}_{YY}^{(n)} + \epsilon_n I)^{-1/2}\hat{\Sigma}_{YX}^{(n)}(\hat{\Sigma}_{XX}^{(n)} + \epsilon_n I)^{-1/2}.
\end{align*}
Empirically, we use $||\hat{\mathcal{R}}_{YX}^{(n)}(\epsilon_n)||$ as the measure of dependency between predictor $X$ and response $Y$. $\hat{\mathcal{R}}_{YX}^{(n)}(\epsilon_n)$ was first introduced in \citet{fukumizu2007statistical} and is called the normalized cross-covariance operator (NOCCO)

\subsection{An Independence Ranking and Screening Procedure}
In this section we propose an independence screening procedure based on KCCA. We assume a response $Y\in\mathbb{R}^d$ and predictors $X\in\mathbb{R}^{p}$, with $p$ growing with $n$ and $d$ fixed. It is often assumed that only a small number of predictors are relevant to $Y$. 

Denote by $\P(Y|X)$ the conditional distribution of $Y$ given $X$. Following \citep{li2012feature}, we define the set of relevant variables called \emph{active set} $\mathcal{M}$ and irrelevant variables called \emph{inactive set} $\mathcal{I}$ as:
\begin{align*}
\mathcal{M} &= \{r:\P(Y|X)\text{ depends on }X_r\}\text{, and}\\
\mathcal{I} &=\{r:\P(Y|X)\text{ does not depend on }X_r\}{.}
\end{align*}
We write $X_{\mathcal{M}} = \{X_r:r\in\mathcal{M}\}$ and $X_{\mathcal{I}} = \{X_r:r\in\mathcal{I}\}$, and call $X_{\mathcal{M}}$ as an \emph{active predictor vector} and its complement $X_{\mathcal{I}}$ as an \emph{inactive predictor vector}. By the definition we know that $Y$ and $X_{\mathcal{I}}$ are independent conditional on $X_{\mathcal{M}}$. In this case, feature selection involves estimating the set $\mathcal{M}$ from the given $n$ samples.

A direct way is to rank the predictors according to their degree of dependence with the response. We consider the norm of correlation operator as a measure of such dependence. To be specific, we write
$$
\rho_r(\epsilon_n) = ||(\Sigma_{YY} + \epsilon_nI)^{-1/2}\Sigma_{YX_r}(\Sigma_{X_rX_r}+\epsilon_nI)^{-1/2}||,
$$
to be the measure of dependence between $X_r$ and $Y$, because $\rho_r(\epsilon_n) = 0$ for any $\epsilon_n > 0$ iff $X_r$ and $Y$ are independent. Similar to distance correlation, our measure here is model-free and allows for multivariate response and group predictors. Similar to sup-HSIC-SIS, our method can be used in the case of more general topological space for response $Y$. 

\subsection{The learning algorithm}
\subsubsection{Choice of kernel}
As mentioned before, we choose Gaussian kernel for its universal property. The form of Gaussian kernel is defined as:
$$
k(x,y) = \exp(-\gamma\|x-y\|_2^2),
$$
where $\|\cdot\|_2$ stands for Euclidean norm.

In sample version, we have the corresponding estimator $\hat{\rho}_r (\epsilon_n)= ||\hat{\mathcal{R}}^{(n)}_{YX_r}(\epsilon_n)||$. In order to select the relevant variables, we first compute $\hat{\rho}_r(\epsilon_n)$ for $r = 1,...,p$ and define
$$
\hat{\mathcal{M}} = \{r:\hat{\rho}_r(\epsilon_n)\geq C_3\epsilon_n^{-3/2}n^{-\kappa}, \text{ for }1\leq r\leq p\}
$$
as the estimated set of active predictors, where $0\leq \kappa < 1/2$, $C_3$ is predefined constant in condition (C2) and $\epsilon_n^{-3/2}$ is due to some technical issues explained later.

\subsubsection{Sample level estimator}
Following \citet{lee2016variable}, we will derive the empirical representation of $||\hat{\mathcal{R}}^{(n)}_{YX_r}(\epsilon_n)||$, where $Y\in\mathbb{R}^d$ and $X_r\in\mathbb{R}$. Suppose we observe $n$ i.i.d samples $(X_r^{(1)},Y^{(1)}),...,(X_r^{(n)},Y^{(n)})$, let $K_{X_r},K_Y$ be two positive semidefinite kernel matrices with $(K_{X_r})_{ij} = k({X_r}^{(i)},{X_r}^{(j)})$ and $(K_Y)_{ij} = k(Y^{(i)},Y^{(j)})$. Let $Q = I_n - \frac{1}{n}\bold{1}\bold{1}^T$, $G_{X_r} = QK_{X_r}Q$, and $G_Y = QK_YQ$. Let the singular value decompositions of $G_{X_r}$ and $G_Y$ be $U_{X_r}D_{X_r}U_{X_r}^T$ and $U_YD_YU_Y^T$, respectively. Here $U_{X_r}, D_{X_r}, U_Y, D_Y\in\mathbb{R}^{n\times n}$. {We use $A^ \dagger$ to denote the Moore--Penrose inverse of a matrix $A$, and $A^ {\dagger \alpha}$ to denote $(A^ \dagger ) ^ \alpha$ } . We choose the orthonormal basis {$$(\phi_1,...,\phi_{r_x}) = (k_{x_r}(\cdot,{X_r}^{(1)}),...,k_{x_r}(\cdot,{X_r}^{(n)}))QU_{X_r}D_{X_r}^{{\dagger 1/2}}$$ and 
$$
(\psi_1,...,\psi_{r_y}) = (k_y(\cdot,Y^{(1)}),...,k_y(\cdot,Y^{(n)}))QU_YD_Y^{{\dagger 1/2}}.
$$}  
 Then we can represent $f = (\phi_1,...,\phi_{r_x})[f]$ for $[f]\in\mathbb{R}^n$ and
 \be\label{eq:repf}
 (f({X_r}^{(1)}),...,f({X_r}^{(n)}))^T = K_{X_r}QU_{X_r}D_{X_r}^{{\dagger 1/2}}[f].
 \ee
 The notation $[\cdot]$ is the coordinate with respect to the new basis system; \citet{lee2013general} and \citet{lee2016variable} also adopted a similar coordinate system. We denote $\mathcal{H}^{(n)}_{X_r}\subseteq \mathcal{H}_{X_r}$ to be the RKHS generated by $(k_{x_r}(\cdot,{X_r}^{(1)}),...,k_{x_r}(\cdot,{X_r}^{(n)}))$ and similarly for $\mathcal{H}^{(n)}_Y\subseteq \mathcal{H}_Y$. Then for any two functions $f_1,f_2\in\mathcal{H}^{(n)}_{X_r}$
\begin{align*}
\langle f_1,f_2\rangle_{\mathcal{H}_{X_r}} &= [f_1]^TD_{X_r}^{{\dagger 1/2}}U_{X_r}^TQK_{X_r}QU_{X_r}D_{X_r}^{{\dagger 1/2}}[f_2] = [f_1]^T[f_2].
\end{align*}
For $f\in \mathcal{H}^{(n)}_{X_r}$ and $g\in \mathcal{H}^{(n)}_Y$, 
\begin{align*}
[g]^T[\hat{\Sigma}^{(n)}_{Y{X_r}}][f] = \langle g,\hat{\Sigma}^{(n)}_{Y{X_r}}f\rangle_{\mathcal{H}_Y} & = (g(Y_1),...,g(Y_n))^TQ(f({X_r}_1),...,f({X_r}_n))\\
&=[g]^TD_Y^{{\dagger 1/2}}U_Y^TQK_YQQK_{X_r}QU_{X_r}D_{X_r}^{{\dagger 1/2}}[f]\\
&=[g]^TD_Y^{1/2}U_Y^TU_{X_r}D_{X_r}^{1/2}[f],
\end{align*}
where the second equality follows from equation (\ref{eq:repf}). So we have $[\hat{\Sigma}^{(n)}_{Y{X_r}}] = D_Y^{1/2}U_Y^TU_{X_r}D_{X_r}^{1/2}$, $[\hat{\Sigma}^{(n)}_{YY}] = D_Y$, $[\hat{\Sigma}^{(n)}_{{X_r}{X_r}}] = D_{X_r}$. Then we can easily show that
\be\label{eq:R}
[\hat{\mathcal{R}}_{Y{X_r}}^{(n)}(\epsilon_n)] = (D_Y+\epsilon_nI)^{-1/2}D_Y^{1/2}U_Y^TU_{X_r}D_{X_r}^{1/2}(D_{X_r} + \epsilon_nI)^{-1/2}.
\ee
Since we just conduct the orthogonal transformation of the original matrix, the operator norm of sample correlation operator is just the largest singular value of $[\hat{\mathcal{R}}_{Y{X_r}}^{(n)}]$. 

\subsubsection{Tuning parameter selection}
For Gaussian kernel, we need to choose the bandwidth parameter $\gamma$.  {For $i=1,\ldots,p$, we compute $\gamma_i$ via
\begin{equation}\label{width}
\frac{1}{\sqrt{\gamma_ i}} = \frac{2\sqrt{2}}{n(n-1)}\sum_{i<j}\|X^{(i)} - X^{(j)}\|_2. 
\end{equation}
Similarly we can compute $\gamma_Y$ for $Y$}.

For the choice of $\epsilon_n$, we use a generalized cross-validation (GCV) criterion similar to \citet{li2014additive}. To be specific, let $L_Y = (\bold{1},K_Y)^T$, $L_r = (\bold{1},K_{X_r})^T$, where $K_Y$ and $K_{X_r}$ are the corresponding kernel matrices. Then we define
\be\label{eq:GCV}
{\mathrm{GCV}}(\epsilon_n) = \sum_{r=1}^p\frac{||L_Y-L_YL_{r}^T(L_{r}L_{r}^T + \epsilon_n I_{n+1})^{-1}L_r||_F^2}{\{1-{\mathrm{tr}}(L_r^T(L_rL_r^T + \epsilon_nI_{n+1})^{-1}L_r)/n\}^2},
\ee
where $\|\cdot\|_F$ is the Frobenius norm of a matrix. We choose $\epsilon_n$ by minimizing $\mathrm{GCV}(\epsilon_n)$.

\subsubsection{Feature screening procedure}
The algorithm is as follows:
{
\begin{enumerate}

\item[(a)] Calculate the bandwidth parameters $\gamma_1,\ldots,\gamma_p$, and $\gamma_Y$ using (\ref{width});

\item[(b)] Calculate the ridge parameter $\epsilon_ n$ determined by (\ref{eq:GCV}) by grid search in the set $\{10^{-5},10^{-4},...,10^3\}$;

\item[(c)] Compute the gram matrices $G_Y$, $G_{X_1}$, ... , $G_{X_p}$ based on the Gaussian kernel function, and find their singular value decompositions;

\item[(d)] Compute the norm of $[\hat{\mathcal{R}}_{YX_i}^{(n)}(\epsilon_n)]$ based on (\ref{eq:R}); 

\item[(e)] Rank $\|[\hat{\mathcal{R}}_{YX_i}^{(n)}(\epsilon_n)]\|$ for $i=1,\ldots,p$. Suppose $\|[\hat{\mathcal{R}}_{YX_{r_1}}^{(n)}(\epsilon_n)]\|\geq \cdots \geq \|[\hat{\mathcal{R}}_{YX_{r_p}}^{(n)}(\epsilon_n)]\|$; we then estimate $\mathcal{M}$ by $\hat{\mathcal{M}} = \{r_1,\ldots,r_m\}$.\end{enumerate}
}
In practice, the choice of $m$ may depend on the researchers' prior knowledge and also the sample size. In our simulation analysis, we use different numbers of $m$ based on the true number of active predictors. In our real data analysis, we choose the upper $1\%$ as active predictors. Empirically, we recommend using $1.5\epsilon_n^{-3/2}n^{1/4}$, where $\epsilon_n$ is the best tuning parameter chosen by (\ref{eq:GCV}).

\subsection{Theoretical Guarantees}
In this section, we study the theoretical properties of the proposed independence screening method. Our analysis does not require any moment conditions on the variables $X$ and $Y$ such as spherical symmetric distribution in \citet{fan2008sure}, or sub-gaussian in \citet{li2012feature}. {Instead, we require the following two conditions:}
\begin{enumerate}
\item[(C1)] The uniform boundedness of kernel functions:
\be
\sup_{1\leq r\leq p}k_{x_r}(x,x)\leq {B} < \infty,\quad k_y(y,y)\leq {B} < \infty
\ee
\item[(C2)] The minimum signal strength condition:
\be
\min_{r\in\mathcal{M}}\rho_r(\epsilon_n)\geq 2C_3\epsilon_n^{-3/2}n^{-\kappa}, 
\ee
for some constants $C_3>0$ specified in Theorem \ref{thm:3} and $0\leq\kappa <1/2$.
\end{enumerate}
Note that condition {(C1) holds for many commonly used kernels, such as the radial basis function.} Condition (C2) requires that KCCA measure corresponding to the active predictors cannot be too weak, which is an analog of  condition 3 of \citet{fan2008sure}. First, we have a concentration bound for cross-covariance operator as in Theorem \ref{thm:1}:
\begin{thm}\label{thm:1}
Suppose (C1) holds, then we have for $r = 1,...,p$,
\begin{align*}
\P\{ ||\hat{\Sigma}_{YX_r} -\Sigma_{YX_r} ||_{\HS} - \E||\hat{\Sigma}_{YX_r} -\Sigma_{YX_r} ||_{\HS} \geq t\}\leq \exp(-\frac{C_2nt^2}{B^2}){.}
\end{align*}
\end{thm}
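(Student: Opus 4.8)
The statement is a concentration-of-measure inequality bounding the deviation of $\|\hat\Sigma_{YX_r}-\Sigma_{YX_r}\|_{\HS}$ from its own mean, so the natural engine is McDiarmid's bounded-differences inequality. The plan is to view $g(z_1,\dots,z_n) = \|\hat\Sigma_{YX_r}-\Sigma_{YX_r}\|_{\HS}$, where $z_i = (X_r^{(i)},Y^{(i)})$, as a function of the $n$ independent samples, show that $g$ changes by at most $O(B/n)$ when any single coordinate is replaced, and then invoke McDiarmid. First I would rewrite the empirical operator in feature-map form. Writing $u_i = k_y(\cdot,Y^{(i)})$ and $v_i = k_{x_r}(\cdot,X_r^{(i)})$ for the canonical feature maps and $\bar u = \frac1n\sum_i u_i$, $\bar v = \frac1n\sum_i v_i$, a short expansion of the centered definition gives
$$
\hat\Sigma_{YX_r} = \frac{1}{n}\sum_{i=1}^n u_i \otimes v_i - \bar u \otimes \bar v .
$$
Since $\Sigma_{YX_r}$ is nonrandom it cancels in any difference, so by the reverse triangle inequality a single-coordinate change bounds $|g-g'|$ by $\|\hat\Sigma_{YX_r}-\hat\Sigma_{YX_r}'\|_{\HS}$, where $\hat\Sigma_{YX_r}'$ is the operator after replacing $z_j$ by $z_j'$.

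Then I would carry out the bounded-differences computation. Condition (C1) gives $\|u_i\|_{\mathcal{H}_Y}^2 = k_y(Y^{(i)},Y^{(i)}) \le B$ and likewise $\|v_i\|^2 \le B$, so each feature map has norm at most $\sqrt B$, each rank-one term $u_i\otimes v_i$ has HS norm at most $B$, and $\|\bar u\|,\|\bar v\|\le\sqrt B$. Replacing $z_j$ by $z_j'$ changes the summation term by $\frac1n(u_j'\otimes v_j' - u_j\otimes v_j)$, whose HS norm is at most $2B/n$, and perturbs each mean element by at most $2\sqrt B/n$; expanding $\bar u'\otimes\bar v' - \bar u\otimes\bar v = (\bar u'-\bar u)\otimes\bar v' + \bar u\otimes(\bar v'-\bar v)$ then bounds the change of the centering term by $4B/n$. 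Summing yields a bounded-difference constant $c \le 6B/n$ for each coordinate.

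Finally, McDiarmid's inequality delivers
$$
\P\{g - \E g \ge t\} \le \exp\!\left(-\frac{2t^2}{\sum_{i=1}^n c^2}\right) \le \exp\!\left(-\frac{2t^2}{n(6B/n)^2}\right) = \exp\!\left(-\frac{n t^2}{18 B^2}\right),
$$
which is exactly the claimed bound with $C_2$ equal to a universal constant (here $1/18$). The only delicate point, and the step I would watch most carefully, is the bookkeeping in the bounded-differences constant: the empirical means $\bar u,\bar v$ enter every summand, so one must verify that perturbing a single sample still produces only an $O(1/n)$ change in the centered operator rather than an $O(1)$ change. It is precisely this $O(B/n)$ scaling, together with $\sum_i c^2 = O(B^2/n)$, that produces the factor $n$ in the exponent and later permits a union bound over all $p$ predictors.
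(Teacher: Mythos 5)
Your proposal is correct and follows essentially the same route as the paper: verify the bounded-differences property of $\|\hat\Sigma_{YX_r}-\Sigma_{YX_r}\|_{\HS}$ in each sample and apply McDiarmid's inequality, with condition (C1) bounding the feature-map norms. The only cosmetic difference is that you work with uncentered feature maps and obtain the coordinate constant $6B/n$, whereas the paper expands in population-centered elements $F_{ri},G_i$ and gets $32B/n$; since $C_2$ is an unspecified constant, the two are equivalent.
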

Based on the concentration bound in Theorem \ref{thm:1}, we can establish the following concentration bound for the correlation operator:
\begin{thm}\label{thm:2}
Suppose (C1) holds, $\epsilon_n = o(1)$ and $n^{-1}\epsilon_n^{-3} = o(1)${. Then} there exist constants $C_1,C_2>0$, such that 
\begin{align*}
\P\{ ||\hat{{\mathcal{R}}}_{YX_r}^{(n)}(\epsilon_n) - (\Sigma_{YY} + \epsilon_nI)^{-1/2}\Sigma_{YX_r}&(\Sigma_{X_rX_r} + \epsilon_nI)^{-1/2}|| \\
&- C_1K^{3/2}\epsilon_n^{-3/2}n^{-1/2}\geq t\}\leq 3\exp(-\frac{C_2\epsilon_n^3 nt^2}{B^2}),
\end{align*}
for $r = 1,...,p$.
\end{thm}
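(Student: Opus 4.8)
The plan is to reduce the correlation-operator deviation to the cross- and auto-covariance deviations already controlled by Theorem~\ref{thm:1}, via a telescoping decomposition together with a sharp operator-perturbation bound for the regularized inverse square root. Write $A=(\Sigma_{YY}+\epsilon_nI)^{-1/2}$, $C=(\Sigma_{X_rX_r}+\epsilon_nI)^{-1/2}$, $B=\Sigma_{YX_r}$, and let $\hat A,\hat C,\hat B$ be their empirical counterparts, so the quantity to control is $\|\hat A\hat B\hat C-ABC\|$. I would telescope as
\[
\hat A\hat B\hat C-ABC=\hat A\hat B(\hat C-C)+\hat A(\hat B-B)C+(\hat A-A)BC,
\]
and bound each operator norm separately. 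The crucial point, which forces the $\epsilon_n^{-3/2}$ scaling rather than the naive $\epsilon_n^{-2}$, is to group factors so that each regularized inverse square root is paired with its matching covariance square root: by Baker's factorization $\hat\Sigma_{YX_r}=\hat\Sigma_{YY}^{1/2}\hat{\mathcal R}\hat\Sigma_{X_rX_r}^{1/2}$ with $\|\hat{\mathcal R}\|\le1$, together with the contraction $\|(\hat\Sigma_{YY}+\epsilon_nI)^{-1/2}\hat\Sigma_{YY}^{1/2}\|\le1$, one gets $\|\hat A\hat B\|\le\|\hat\Sigma_{X_rX_r}^{1/2}\|\le\sqrt B$, and symmetrically $\|BC\|\le\sqrt B$.

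Second, I would control the two differences of regularized inverse square roots. Using the representation $(S+\epsilon_nI)^{-1/2}=\pi^{-1}\int_0^\infty\lambda^{-1/2}(S+(\epsilon_n+\lambda)I)^{-1}\,d\lambda$ and the resolvent identity, for positive self-adjoint $S,T$ one obtains
\[
\|(S+\epsilon_nI)^{-1/2}-(T+\epsilon_nI)^{-1/2}\|\le\frac{\|S-T\|}{\pi}\int_0^\infty\frac{\lambda^{-1/2}}{(\epsilon_n+\lambda)^2}\,d\lambda=\tfrac12\,\epsilon_n^{-3/2}\|S-T\|,
\]
the beta-integral evaluating to $\pi/2$; this is exactly where $\epsilon_n^{-3/2}$ enters. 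Applying this to the pairs $(\hat\Sigma_{YY},\Sigma_{YY})$ and $(\hat\Sigma_{X_rX_r},\Sigma_{X_rX_r})$ and using $\|\cdot\|\le\|\cdot\|_{\HS}$ to pass to Hilbert--Schmidt norms, the decomposition yields
\[
\|\hat A\hat B\hat C-ABC\|\le\tfrac{\sqrt B}{2}\epsilon_n^{-3/2}\big(\delta_A+\delta_C\big)+\epsilon_n^{-1}\delta_B,
\]
where $\delta_A=\|\hat\Sigma_{YY}-\Sigma_{YY}\|_{\HS}$, $\delta_C=\|\hat\Sigma_{X_rX_r}-\Sigma_{X_rX_r}\|_{\HS}$, $\delta_B=\|\hat\Sigma_{YX_r}-\Sigma_{YX_r}\|_{\HS}$. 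Because $\epsilon_n=o(1)$, the middle term ($\epsilon_n^{-1}$) is dominated by the two outer terms ($\epsilon_n^{-3/2}$).

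Third, I would split each $\delta$ into mean plus fluctuation. A variance computation using (C1) (the centered feature maps have norm at most $2\sqrt B$) gives $\E\delta_A,\E\delta_B,\E\delta_C\le c\,B\,n^{-1/2}$, so the deterministic part of the bound is at most $C_1B^{3/2}\epsilon_n^{-3/2}n^{-1/2}$, which is exactly the centering constant $C_1K^{3/2}\epsilon_n^{-3/2}n^{-1/2}$ with $K$ the kernel bound. For the fluctuations I would invoke Theorem~\ref{thm:1} — whose bounded-differences argument applies verbatim to the auto-covariances $\hat\Sigma_{YY}$ and $\hat\Sigma_{X_rX_r}$ — for each of $\delta_A-\E\delta_A$, $\delta_C-\E\delta_C$, $\delta_B-\E\delta_B$. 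Splitting the excess $t$ equally among the three contributions, rescaling the thresholds (the amplifier $\epsilon_n^{-3/2}$ becomes $\epsilon_n^{3}$ upon squaring inside the sub-Gaussian exponent), and taking a union bound produces the factor $3$ and the stated tail $3\exp(-C_2\epsilon_n^3nt^2/B^2)$, the power of the kernel bound being tracked through the constants. The hypothesis $n^{-1}\epsilon_n^{-3}=o(1)$ guarantees the centering term $\epsilon_n^{-3/2}n^{-1/2}=(n\epsilon_n^3)^{-1/2}$ vanishes, making the bound informative.

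The main obstacle is the perturbation analysis of the regularized inverse square root: a naive triangle-inequality split would force two factors of $\epsilon_n^{-1/2}$ to multiply the $\epsilon_n^{-3/2}$ amplification and produce an $\epsilon_n^{-2}$ rate. The remedy is the Baker-factorization regrouping above, which leaves each term carrying only one uncompensated inverse square root; combined with the sharp operator-Lipschitz constant from the resolvent integral, this is what delivers the $\epsilon_n^{-3/2}$ exponent in the statement.
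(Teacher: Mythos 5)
Your proposal is correct, and its architecture coincides with the paper's: the same three-term telescoping of $\hat A\hat B\hat C-ABC$ (the paper writes the mirror grouping $(\hat A-A)\hat B\hat C+A(\hat B-B)\hat C+AB(\hat C-C)$, which is equivalent), the same expectation bounds $\E\|\hat\Sigma-\Sigma\|_{\HS}\le c_1Bn^{-1/2}$ (Lemma \ref{lem:4}), the same McDiarmid fluctuation control from Theorem \ref{thm:1} applied to all three operators (you are right, and more explicit than the paper, that the bounded-difference argument transfers verbatim to $\hat\Sigma_{YY}^{(n)}$ and $\hat\Sigma_{X_rX_r}^{(n)}$), the same observation that the $\epsilon_n^{-1}$ middle term is dominated by $\epsilon_n^{-3/2}$ once $\epsilon_n=o(1)$, and the same three-way union bound with the substitution $u=\epsilon_n^{-3/2}t$ that converts the sub-Gaussian exponent into $\epsilon_n^3 n u^2/B^2$. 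The one genuinely different ingredient is how the $\epsilon_n^{-3/2}$ rate for the regularized inverse square root is obtained. The paper uses the identity $D^{-1/2}-C^{-1/2}=C^{-1/2}(C^{3/2}-D^{3/2})D^{-3/2}+(D-C)D^{-3/2}$ together with Lemma \ref{lem:1} ($\|A^{3/2}-B^{3/2}\|\le 3\lambda^{1/2}\|A-B\|$), and it is essential there that the $D^{-3/2}$ factor is not bounded in isolation but threaded through the empirical contraction $\|(\hat\Sigma_{YY}^{(n)}+\epsilon_nI)^{-1/2}\hat\Sigma_{YX_r}^{(n)}(\hat\Sigma_{X_rX_r}^{(n)}+\epsilon_nI)^{-1/2}\|\le 1$, so the perturbation step is entangled with the neighboring factors. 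You instead prove a standalone operator-Lipschitz estimate $\|(S+\epsilon_nI)^{-1/2}-(T+\epsilon_nI)^{-1/2}\|\le\frac12\epsilon_n^{-3/2}\|S-T\|$ via the spectral representation $(S+\epsilon_nI)^{-1/2}=\pi^{-1}\int_0^\infty\lambda^{-1/2}\bigl(S+(\epsilon_n+\lambda)I\bigr)^{-1}\,d\lambda$ and the resolvent identity (your beta-integral evaluation is correct), and then restore the $\sqrt{B}$ factors by Baker's factorization applied at the empirical level ($\|\hat A\hat B\|\le c\sqrt{B}$) and the population level ($\|BC\|\le c\sqrt{B}$) — the same regrouping idea the paper uses, correctly identified by you as the step that averts the naive $\epsilon_n^{-2}$. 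Your variant buys a modular lemma with a sharp explicit constant independent of $\|\Sigma_{YY}\|$ (the paper's bound carries a $\max\{\|\Sigma_{YY}+\epsilon_nI\|^{1/2},\|\hat\Sigma_{YY}^{(n)}+\epsilon_nI\|^{1/2}\}$ factor, whence its extra $B^{1/2}$), at the cost of invoking operator-valued Bochner integration, whereas the paper stays within the elementary algebraic toolkit inherited from Fukumizu et al. Two cosmetic points: your symbol $B$ is overloaded (it names both $\Sigma_{YX_r}$ and the kernel bound, and under (C1) one only gets $\|\hat\Sigma_{X_rX_r}^{1/2}\|\le 2\sqrt{B}$, harmless to the constants), and your reading of the statement's $K$ as the kernel bound $B$ agrees with the paper's proof, where the centering term indeed appears as $C_1B^{3/2}\epsilon_n^{-3/2}n^{-1/2}$.
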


Based on the concentration bounds and conditions (C1) and (C2), we can achieve the following sure independence screening property.
\begin{thm}\label{thm:3}
Suppose (C1) holds, there exist constants $C_3,C_4>0$, $ 0\leq \kappa < 1/2$ such that
\begin{align*}
\P\{\max_{1\leq r\leq p}|\hat{\rho}_r (\epsilon_n)- \rho_r(\epsilon_n)|\geq C_3B^{3/2}\epsilon_n^{-3/2}n^{-\kappa}\}\leq 3p\exp(-{C_4Bn^{1-2\kappa}}){.}
\end{align*}
Furthermore if condition (C2) holds, then we have the following sure screening property:
\begin{align*}
\P\{\mathcal{M}\subseteq \hat{\mathcal{M}}\}\geq 1 - 3s\exp(-{C_4Bn^{1-2\kappa}}),
\end{align*}
where $s$ is the cardinality of $\mathcal{M}$.
\end{thm}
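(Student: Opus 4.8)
The plan is to derive both statements from the operator-norm concentration bound of Theorem~\ref{thm:2}, using only the elementary fact that the difference of two operator norms is dominated by the norm of the difference. Write $\mathcal{R}_{YX_r}(\epsilon_n) = (\Sigma_{YY}+\epsilon_nI)^{-1/2}\Sigma_{YX_r}(\Sigma_{X_rX_r}+\epsilon_nI)^{-1/2}$, so that $\rho_r(\epsilon_n) = \|\mathcal{R}_{YX_r}(\epsilon_n)\|$ and $\hat\rho_r(\epsilon_n) = \|\hat{\mathcal{R}}^{(n)}_{YX_r}(\epsilon_n)\|$. The reverse triangle inequality for the operator norm gives
\begin{align*}
|\hat\rho_r(\epsilon_n)-\rho_r(\epsilon_n)| = \bigl|\,\|\hat{\mathcal{R}}^{(n)}_{YX_r}(\epsilon_n)\| - \|\mathcal{R}_{YX_r}(\epsilon_n)\|\,\bigr| \le \|\hat{\mathcal{R}}^{(n)}_{YX_r}(\epsilon_n)-\mathcal{R}_{YX_r}(\epsilon_n)\|,
\end{align*}
so it suffices to control the right-hand side, which is exactly the quantity handled by Theorem~\ref{thm:2}.

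For the first claim I would fix $r$ and apply Theorem~\ref{thm:2} with $t = c\,B^{3/2}\epsilon_n^{-3/2}n^{-\kappa}$ for a constant $c>0$ to be calibrated. Since $0\le\kappa<1/2$ we have $n^{-1/2}\le n^{-\kappa}$, so the deterministic bias term $C_1 K^{3/2}\epsilon_n^{-3/2}n^{-1/2}$ is itself at most $C_1 B^{3/2}\epsilon_n^{-3/2}n^{-\kappa}$; combining it with $t$ and setting $C_3 = C_1 + c$ bounds the total deviation by $C_3 B^{3/2}\epsilon_n^{-3/2}n^{-\kappa}$. Substituting the same $t$ into the exponent of Theorem~\ref{thm:2} yields $C_2\epsilon_n^3 n t^2/B^2 = C_2 c^2 B n^{1-2\kappa}$, so with $C_4 = C_2 c^2$ each fixed $r$ obeys $\P\{|\hat\rho_r(\epsilon_n)-\rho_r(\epsilon_n)|\ge C_3 B^{3/2}\epsilon_n^{-3/2}n^{-\kappa}\}\le 3\exp(-C_4 B n^{1-2\kappa})$. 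A union bound over $r=1,\dots,p$ then produces the stated factor $3p$.

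For the sure screening property I would condition on the complement of the above event, namely $\{\max_{1\le r\le p}|\hat\rho_r(\epsilon_n)-\rho_r(\epsilon_n)| < C_3\epsilon_n^{-3/2}n^{-\kappa}\}$, where for the Gaussian kernel $k(x,x)=1$ lets us take $B=1$ so that $B^{3/2}=1$ and the concentration half-width coincides with the screening threshold. On this event, for any $r\in\mathcal{M}$ condition (C2) gives $\rho_r(\epsilon_n)\ge 2C_3\epsilon_n^{-3/2}n^{-\kappa}$, whence
\begin{align*}
\hat\rho_r(\epsilon_n) \ge \rho_r(\epsilon_n) - |\hat\rho_r(\epsilon_n)-\rho_r(\epsilon_n)| \ge 2C_3\epsilon_n^{-3/2}n^{-\kappa} - C_3\epsilon_n^{-3/2}n^{-\kappa} = C_3\epsilon_n^{-3/2}n^{-\kappa},
\end{align*}
so every active index lands in $\hat{\mathcal{M}}$ and $\mathcal{M}\subseteq\hat{\mathcal{M}}$. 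Because failure of this inclusion forces a large deviation at some $r\in\mathcal{M}$ only, I would union bound over the $s=|\mathcal{M}|$ active indices rather than all $p$, yielding $\P\{\mathcal{M}\subseteq\hat{\mathcal{M}}\}\ge 1 - 3s\exp(-C_4 B n^{1-2\kappa})$.

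The substantive analytic work is already encapsulated in Theorem~\ref{thm:2}; the remaining difficulty here is essentially bookkeeping, namely calibrating $t$ so that the deterministic bias term and the stochastic fluctuation combine into the single threshold $C_3 B^{3/2}\epsilon_n^{-3/2}n^{-\kappa}$ while the exponent collapses to $C_4 B n^{1-2\kappa}$. The one point that genuinely requires care is keeping the $B^{3/2}$ factor consistent across (C1), (C2), the definition of $\hat{\mathcal{M}}$, and the first claim: the subtraction in the display goes through cleanly precisely because the factor $2$ in (C2) creates a gap equal to the screening threshold, and this matching relies on the kernel normalization ($B=1$ for the Gaussian kernel) being held fixed throughout.
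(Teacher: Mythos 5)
Your proposal is correct and follows essentially the same route as the paper's proof: bound $|\hat\rho_r(\epsilon_n)-\rho_r(\epsilon_n)|$ by the operator-norm difference via the reverse triangle inequality, invoke Theorem~\ref{thm:2} with $t$ of order $B^{3/2}\epsilon_n^{-3/2}n^{-\kappa}$, and finish with union bounds over $p$ indices for the deviation claim and over the $s$ active indices for the screening claim. Your version is in fact slightly more careful than the paper's, since you explicitly absorb the bias term using $n^{-1/2}\le n^{-\kappa}$, track how $C_3=C_1+c$ and $C_4=C_2c^2$ arise, and flag the $B^{3/2}$ mismatch between (C2)/$\hat{\mathcal{M}}$ and the theorem statement (resolved by the Gaussian-kernel normalization $B=1$), all of which the paper leaves implicit.
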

Based on the above result, we can handle the NP dimensionality $\log p = o(n^{1-2\kappa})$. 

The sure screening property without controlling for false selection rates is not satisfactory. Ideally if there is a gap between active variables and inactive variables regarding their $\rho(\epsilon_n)$, i.e. $\max_{j\notin \mathcal{M}}\rho_r(\epsilon_n) = o(B^{3/2}\epsilon_n^{3/2}n^{-\kappa})$, the false-positive rate will vanish.
 
{Next, we show that the size of $\hat{\mathcal{M}}$ can be controlled when there is no severe dependency between the predictors.
Suppose $\h_X$ is the direct sum $\oplus_{r=1}^p\h_{X_r}$; in other words, $\h_X$ is induced by the additive kernel  $k_x(s,t) = \sum_{i=1}^pk_{x_r}(s_i,t_i)$, for any $s=(s_1,\ldots,s_p)$ and $t=(t_1,\ldots,t_p)$. It can be shown that the covariance operator $\Sigma_{XX}: \h_X \to \h_X$ has a matrix form satisfying that, for any $f=(f_1,\ldots,f_p)\in \h_X$,
\begin{eqnarray*}
\Sigma_{XX} f = \sum_{r=1}^p \sum_{s=1}^p\Sigma_{X_rX_s}f_s.
\end{eqnarray*}
Then the following result provides an upper bound for $|\hat{\mathcal{M}}|$.
\begin{thm}\label{thm:4}
For $\epsilon_n \leq 1$, we have 
$$
\P\{|\hat{\mathcal{M}}|\leq O(n^{2\kappa}\lambda_{\mathrm{max}}(\Sigma_{XX}))\} \geq 1 - 3p\exp(-{C_4Bn^{1-2\kappa}}),
$$
where $\lambda_{\mathrm{max}}(\cdot)$ represents the largest singular value of the corresponding operator, and $C_4>0$ is the constant in Theorem \ref{thm:3}.
\end{thm}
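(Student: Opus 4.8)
The plan is to reduce the statement to a deterministic upper bound on $\sum_{r=1}^p\rho_r(\epsilon_n)^2$ in terms of $\lambda_{\mathrm{max}}(\Sigma_{XX})$, and then to invoke Theorem \ref{thm:3} to pass from the data-dependent set $\hat{\mathcal{M}}$ to these population quantities. First I would work on the good event furnished by Theorem \ref{thm:3}: off an event of probability at most $3p\exp(-C_4Bn^{1-2\kappa})$, the screening statistics concentrate around their population values at rate $\epsilon_n^{-3/2}n^{-\kappa}$, so that every retained index $r\in\hat{\mathcal{M}}$ obeys $\rho_r(\epsilon_n)\ge c_0\epsilon_n^{-3/2}n^{-\kappa}$ for a suitable constant $c_0>0$. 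Squaring and summing over $\hat{\mathcal{M}}$ then gives the counting inequality
\begin{equation*}
|\hat{\mathcal{M}}|\,c_0^2\,\epsilon_n^{-3}n^{-2\kappa}\le\sum_{r\in\hat{\mathcal{M}}}\rho_r(\epsilon_n)^2\le\sum_{r=1}^p\rho_r(\epsilon_n)^2,
\end{equation*}
so it remains to control the full sum on the right by $\lambda_{\mathrm{max}}(\Sigma_{XX})$.

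The heart of the argument is a chain of operator inequalities bounding $\sum_r\rho_r(\epsilon_n)^2$. Writing $A=(\Sigma_{YY}+\epsilon_nI)^{-1}$ and using that the operator norm is dominated by the Hilbert--Schmidt norm, I would first bound each term by $\rho_r(\epsilon_n)^2\le\|A^{1/2}\Sigma_{YX_r}(\Sigma_{X_rX_r}+\epsilon_nI)^{-1/2}\|_{\HS}^2$ and then decouple the $X_r$-normalization via $(\Sigma_{X_rX_r}+\epsilon_nI)^{-1}\preceq\epsilon_n^{-1}I$, giving $\rho_r(\epsilon_n)^2\le\epsilon_n^{-1}\|A^{1/2}\Sigma_{YX_r}\|_{\HS}^2$. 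The direct-sum structure $\h_X=\oplus_{r=1}^p\h_{X_r}$ is exactly what lets me reassemble the blocks: summing over $r$ and choosing an orthonormal basis of $\h_X$ adapted to the decomposition yields $\sum_r\|A^{1/2}\Sigma_{YX_r}\|_{\HS}^2=\|A^{1/2}\Sigma_{YX}\|_{\HS}^2=\mathrm{tr}(A\,\Sigma_{YX}\Sigma_{XY})$, now phrased through the joint cross-covariance. Applying Baker's factorization $\Sigma_{YX}=\Sigma_{YY}^{1/2}\mathcal{R}_{YX}\Sigma_{XX}^{1/2}$, the bounds $\Sigma_{XX}\preceq\lambda_{\mathrm{max}}(\Sigma_{XX})I$ and $\|\mathcal{R}_{YX}\|\le1$ collapse this to $\mathrm{tr}(A\,\Sigma_{YX}\Sigma_{XY})\le\lambda_{\mathrm{max}}(\Sigma_{XX})\,\mathrm{tr}(A\Sigma_{YY})$, and finally $\mathrm{tr}(A\Sigma_{YY})\le\epsilon_n^{-1}\mathrm{tr}(\Sigma_{YY})\le B\epsilon_n^{-1}$, the last step because $\mathrm{tr}(\Sigma_{YY})=\E[k_y(Y,Y)]-\|m_Y\|^2\le B$ under (C1). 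Collecting factors gives the deterministic bound $\sum_{r=1}^p\rho_r(\epsilon_n)^2\le B\,\epsilon_n^{-2}\,\lambda_{\mathrm{max}}(\Sigma_{XX})$.

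Combining the two displays, on the good event $|\hat{\mathcal{M}}|\le(B/c_0^2)\,\epsilon_n\,n^{2\kappa}\,\lambda_{\mathrm{max}}(\Sigma_{XX})$, and since $\epsilon_n\le1$ this is $O(n^{2\kappa}\lambda_{\mathrm{max}}(\Sigma_{XX}))$, holding with the probability inherited from Theorem \ref{thm:3}. I expect the main obstacle to be the middle step: correctly passing from the $p$ marginal operators, each normalized by its own $\Sigma_{X_rX_r}$, to a single quantity governed by the joint $\Sigma_{XX}$. The device that makes this work is replacing each marginal regularized inverse by $\epsilon_n^{-1}I$ before summing — this costs the extra $\epsilon_n^{-1}$ factor (hence the $\epsilon_n^{-2}$ overall and the need for $\epsilon_n\le1$) but is precisely what allows the per-block Hilbert--Schmidt norms to be glued into $\|A^{1/2}\Sigma_{YX}\|_{\HS}^2$ and Baker's theorem to introduce $\lambda_{\mathrm{max}}(\Sigma_{XX})$. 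Verifying the operator-monotonicity facts used at each stage ($\mathrm{tr}(VMV^*)\le c\,\mathrm{tr}(VV^*)$ for $0\preceq M\preceq cI$, and $\mathrm{tr}(AB_1)\le\mathrm{tr}(AB_2)$ for $A\succeq0$ and $B_1\preceq B_2$) is routine but must be done with care, since all operators are infinite-dimensional and only the trace-class normalizations keep the traces finite.
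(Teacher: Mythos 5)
Your proposal is correct and takes essentially the same route as the paper's own proof: a deterministic bound $\sum_{r=1}^p\rho_r(\epsilon_n)^2 = O\bigl(\epsilon_n^{-2}\lambda_{\mathrm{max}}(\Sigma_{XX})\bigr)$ obtained by gluing the marginal operators over the direct sum $\h_X=\oplus_{r=1}^p\h_{X_r}$ (the paper's Lemma \ref{lm: properties of additive kernel}) and invoking Baker's factorization together with $\mathrm{tr}(\Sigma_{YY})\leq B$ (the paper's Lemma \ref{lm: bounded covariance operator}), followed by the same counting argument on the concentration event of Theorem \ref{thm:3}, with $\epsilon_n\leq 1$ absorbing the leftover power of $\epsilon_n$. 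The only cosmetic difference is that you keep the normalization $(\Sigma_{YY}+\epsilon_nI)^{-1}$ inside the trace before bounding it by $\epsilon_n^{-1}$, whereas the paper discards both regularized inverses at the outset via $\rho_r(\epsilon_n)\leq\epsilon_n^{-1}\|\Sigma_{X_rY}\|_{\HS}$; the two computations coincide in order, and your treatment of constants (a generic $c_0$ on the good event) is at the same level of precision as the paper's choice $\eta=C_3/2$.
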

}

\section{Numerical Results}
In this section, we report results on different simulated and real biological data to illustrate the advantage of the propose method (KCCA-SIS). For the experiments on synthetic data, we consider the data settings from \citet{li2012feature} and \citet{balasubramanian2013ultrahigh} in order to make a {head} to head comparison to their approaches. For evaluation on real world data, we consider a high dimensional human brain gene expression data set, select genes related to marker genes for interneuron cells, and measure the performance of the selection using gene set enrichment analysis.

In simulations 1 and 2, we generate random vector $X = (X_1,X_2,\cdots,X_p)$ from a multivariate Gaussian distribution with zero mean and covariance matrix $\Sigma = (\sigma_{ij})_{p\times p}$, where $\sigma_{ij} = 0.8^{|i - j|}$. The error term $\varepsilon$ is generated from $N(0,1)$. We fix the sample size $n$ to be 200 and number of features $p$ to be 2000. We repeat each experiment 500 times, and evaluate the performance through the following two criteria (the same as those used in \citet{li2012feature}).
\begin{enumerate}
\item[1]
$\mathcal{S}$: the minimum model size to include all active predictors. We report the $ 25\%, 50\%,$ and $75\%$ quantiles of $\mathcal{S}$ using replications.
\item[2]
$\mathcal{P}$: the proportion that all active predictors are selected for a given model size $d$ in the 500 replications.
\end{enumerate}

The metric $\mathcal{S}$ is used as a measure of model complexity needed for sure screening with regard to the underlying screening procedure. The lower the value of $\mathcal{S}$, the better the screening procedure. The sure screening property ensures that $\mathcal{P}$ is close to one when the estimated model size $d$ is sufficiently large. We choose $d$ to be $d_1 = [n/\log n]$, $d_2 = 2d_1$ and $d_3 = 3d_1$ throughout our simulations, where $[c]$ denotes the integer part of $c$.
\subsection{Simulation 1}
This example is designed to compare the finite sample performance of the KCCA-SIS with SIS \citep{fan2008sure}, DC-SIS \citep{li2012feature} and HSIC-SIS \citep{balasubramanian2013ultrahigh}. We generate the response $Y$ according to four models (The first three models are used in \citet{li2012feature}):
\begin{enumerate}
\item[1.]

$Y = c_1\beta_1X_1X_2 + c_3\beta_2\mathbbm{1}(X_{12} < 0) + c_4\beta_3 X_{22} + \varepsilon$;

\item[2.]
$Y = c_1\beta_1X_1X_2 + c_3\beta_2\mathbbm{1}(X_{12} < 0)X_{22} + \varepsilon$;
\item[3.]
$Y = c_1\beta_1X_1 + c_2\beta_2X_2 + c_3\beta_3\mathbbm{1}(X_{12} < 0) + \exp(c_4|X_{22}|)\varepsilon$;
\item[4.]
$Y = X_1/X_2 + X_{12}^2/(1 + \cos (X_{22})) + \varepsilon$,
\end{enumerate}
where $\beta_j = (-1)^U(a + |Z|)$, $a = 4\log n/\sqrt{n}$, $U\sim$ Bernoulli(0.4) and $Z\sim N(0,1)$. We set $(c_1,c_2,c_3,c_4) = (2,0.5,3,2)$ in this example. For each independence screening procedure, we compute the associated marginal effect of $X_r$ on $Y$. In this case we treat $X = (X_1,....,X_p)$ as the predictor variables. We use the GCV criterion to select $\epsilon_n$. 

Tables \ref{Tab:1} and \ref{Tab:2} report the simulation results for $\mathcal{S}$ and $\mathcal{P}$. We can observe that screening fails in all four models by SIS. The proposed method outperforms DC-SIS in all cases and HSIC-SIS in most cases. We notice that our proposed KCCA-SIS is better than DC-SIS, comparable with sup-HSIC-SIS in model 3, where there is heteroscedasticy. The better performance is likely due to the removal of the marginal variations of responses and predictors. We have similar results as HSIC-SIS  for larger $\epsilon_n$. The advantage of the proposed approach is clearly demonstrated in model 4, where the marginal variations are different among predictors. In that case KCCA-SIS performs much better than the other methods. 

\medskip

\begin{table}[ht]
\centering
\resizebox{\columnwidth}{!}{
\begin{tabular}{c|ccc|ccc|ccc|ccc}
\hline
$\mathcal{S}$ & \multicolumn{3}{c}{SIS} & \multicolumn{3}{c}{DC-SIS} & \multicolumn{3}{c}{HSIC-SIS} & \multicolumn{3}{c}{KCCA-SIS} \\ \hline
Model & $25\%$& $50\%$ & $75\%$&$25\%$& $50\%$ & $75\%$&$25\%$& $50\%$ & $75\%$&$25\%$& $50\%$ & $75\%$\\\hline
1& 208.3& 818.0& 1534.0                           & 8.0 & 13.0&20.0                 		    & 7.0 &10.0&	16.0		               & 5.0 &7.0&11.0 \\\hline
2& 801.8& 1302.0& 1663.5                           & 11.0 & 16.0&41.5                 		    & 6.0 &8.0&	13.0		               & 5.0 &6.0&9.0 \\\hline
3& 581.0& 1135.0& 1598.0                           & 7.0 & 13.0&60.3                 		    & 5.0 &8.0&	17.0		               & 6.0 &8.0&27.0 \\\hline
4& 1534.0& 1807.0& 1924.3                           & 385.8 & 770.5&1174.0                 		    & 52.0 &358.0&	867.0		               & 33.0 &139.0&463.3 \\\hline
\end{tabular}
}
\vspace{-.1in}
\caption{ Minimum model size ($\mathcal{S}$) comparisons among different methods in simulation 1}
\label{Tab:1}
\end{table}

\begin{table}[ht]
\begin{center}
\begin{tabular}{c|ccc|ccc|ccc|ccc}
\hline
$\mathcal{P}$ & \multicolumn{3}{c}{SIS} & \multicolumn{3}{c}{DC-SIS} & \multicolumn{3}{c}{HSIC-SIS} & \multicolumn{3}{c}{KCCA-SIS} \\ \hline
Model & $d_1$& $d_2$ & $d_3$&$d_1$& $d_2$ & $d_3$&$d_1$& $d_2$ & $d_3$&$d_1$& $d_2$ & $d_3$\\\hline
1& 0.08& 0.14& 0.17                           & 0.90 & 0.96&0.97                 		    & 0.92 &0.95&0.97		               & 0.94 &0.96&0.97 \\\hline
2& 0.00& 0.01& 0.02                          & 0.73 &0.86&0.91                 		    & 0.92 &0.95&	0.96		               & 0.95 &0.97&0.98 \\\hline
3& 0.01& 0.03& 0.05                           & 0.70 & 0.77&0.80                		    & 0.84 &0.88&	0.90		               & 0.78 &0.85&0.87 \\\hline
4& 0.00& 0.00& 0.00                           & 0.00 & 0.01&0.04                 		    & 0.06 &0.12&0.20		               & 0.21 &0.30&0.37 \\\hline
\end{tabular}
\end{center}
\vspace{-.1in}
\caption{ The proportions ($\mathcal{P}$) comparisons among different methods in simulation 1}
\label{Tab:2}
\end{table}

\medskip

\subsection{Simulation 2}
In {this} experiment, we consider multivariate outputs, while $X$ is generated as before. We generate $Y|X\sim N(\bf{0},\Sigma)$ from a bivariate normal distribution, where $\sigma_{11} = \sigma_{22} = 1$ and $\sigma_{12} = \sigma_{21} = \sigma(X)$. We consider two correlation functions for $\sigma(X)$ given by
\begin{enumerate}
\item[1.]
$\sigma(X) = \sin (\beta_1^TX)$ where $\beta_1 = (0.8,0.6,0,...,0)$;
\item[2.]
$\sigma(X) = \{\exp (\beta_2^TX) - 1\} / \{\exp (\beta_2^TX) + 1\}$ where $\beta_2 = (2 - U_1,2 - U_2, 2 - U_3 , 2 - U_4,0,...,0)$ with $U_i$ drawn i.i.d. from Uniform[0,1].
\end{enumerate} 
In model 1, we choose $d_1 = 2$. In model 2, we choose $d_1 = 4$. And we choose $d_2 = 2d_1$ and $d_3 = 3d_1$ as before. The simulation settings are identical to those in \citet{li2012feature}. Since the response {is a vector}, SIS cannot be applied in this scenario. The simulation results are shown in Table \ref{Tab:3} and Table \ref{Tab:4}. 

\medskip

\begin{table}[ht]
\begin{center}
\begin{tabular}{c|ccc|ccc|ccc}
\hline
$\mathcal{S}$  & \multicolumn{3}{c}{DC-SIS} & \multicolumn{3}{c}{HSIC-SIS} & \multicolumn{3}{c}{KCCA-SIS} \\ \hline
Model &$25\%$& $50\%$ & $75\%$&$25\%$& $50\%$ & $75\%$&$25\%$& $50\%$ & $75\%$\\\hline
1                        & 3.0 &  7.0  &16.0                                    &2.0 &2.0& 3.0 &2.0 &2.0&2.0\\\hline
2                        & 4.0 & 5.0&7.0                 		    & 4.0 &4.0&	4.0		               & 4.0 &4.0&4.0 \\\hline
\end{tabular}
\end{center}
\vspace{-.1in}
\caption{Minimum model size ($\mathcal{S}$) comparisons among different methods in simulation 2}
\label{Tab:3}
\end{table}

\medskip

\begin{table}[ht]
\begin{center}
\begin{tabular}{c|ccc|ccc|ccc}
\hline
$\mathcal{P}$ & \multicolumn{3}{c}{DC-SIS} & \multicolumn{3}{c}{HSIC-SIS} & \multicolumn{3}{c}{KCCA-SIS} \\ \hline
Model &$d_1$& $d_2$ & $d_3$&$d_1$& $d_2$ & $d_3$&$d_1$& $d_2$ & $d_3$\\\hline
1& 0.170 &0.364 & 0.480     &   0.678 &0.868  & 0.926&            0.984& 0.996 &1.000 \\\hline
2& 0.488 &0.856 &0.930 & 0.768 &0.960 &0.984 & 0.978 &1.000 &1.000 \\\hline
\end{tabular}
\end{center}
\vspace{-.1in}
\caption{The proportions ($\mathcal{P}$) comparisons among different methods in simulation 2}
\label{Tab:4}
\end{table}

\subsection{Real data}
In this subsection, we analyze a brain spatial temporal gene expression data set from \citet{kang2011spatio}. We consider gene expression data from 10 neocortex areas (MFC, OFC, DFC, VFC, M1C, S1C, IPC, A1C, STC, ITC) at 13 developmental stages (early fetal to late adulthood). For each gene, there are $10\times 13 = 130$ observations corresponding to a spatial temporal characterization of this gene. There are a total of 17568 genes. \citet{zeisel2015cell} reported newly identified marker genes for interneuron cell types using single cell RNA sequencing on mouse brain. We use those marker genes, including SP8, POU3F4, TOX3, NPAS1, SOX6, NKX2-1, LHX6, PAX6, DLX5, ARX, DLX2, DLX1, ELAVL2, and SP9, as the response variable. Interneurons have been found to function in reflexes, neuronal oscillations, and neurogenesis in the adult mamalian brain \citep{kandel2000principles}. And \citet{zeng2012large} found that the interneuron marker genes are conserved between mouse and human, thus we apply the identified marker genes directly as response variables in human brain gene expression data set. Since this is a multivariate response, we can use DC-SIS, HSIC-SIS, and KCCA-SIS to select predictiors. We select the top 1 percent of genes (i.e., $|\hat{\mathcal{M}}|=176$) related to the above marker genes (including themselves), and then conduct gene enrichment analysis (http://geneontology.org/page/go-enrichment-analysis). We choose the union of five most significant biological processes. The results of fold-change and p-value related to biological processes for each method are shown in Table \ref{Tab:realData1}. We can see that KCCA-SIS captures more biologically meaningful genes as reflected in lower p-values. For neurogenesis, KCCA-SIS identifies 43 enriched genes, while DC-SIS identifies 36 and HSIC-SIS identifies 38 genes, respectively. KCCA-SIS is more powerful in selecting genes with similar biological functions. Besides, KCCA-SIS leads to 45 significant enrichment biological process terms, while DC-SIS leads to 16 terms and HSIC-SIS leads to 21 terms. This suggests that the results provided by KCCA-SIS are more biologically meaningful. 

\medskip

\begin{table}[ht]
\begin{center}
\begin{tabular}{c|c|c|c}
\hline
Biological Process &DC-SIS&HSIC-SIS&KCCA-SIS \\\hline
nervous system development & 6.56E-05 & 3.37E-06&9.68E-11\\\hline
central nervous system development &1.00E00 & 1.00E00&1.19E-10\\\hline
neurogenesis& 6.48E-04 &1.20E-04&4.80E-08\\\hline
single-multicellular organism process& 4.12E-06&2.51E-05&5.12E-08\\\hline
head development &1.00E00&1.00E00&1.02E-07\\\hline
multicellular organismal process & 5.14E-04&9.56E-04&5.07E-07\\\hline
anatomical structure development & 6.60E-04&1.51E-03&5.62E-06\\\hline
system development &1.09E-03&1.93E-04&9.90E-06\\\hline
regulation of biological process &3.75E-03&2.77E-04&7.01E-04\\\hline
\end{tabular}
\end{center}
\vspace{-.1in}
\caption{Gene Ontology enrichment analysis}
\label{Tab:realData1}
\end{table}

\section{Discussion}
In this article we have proposed an ultrahigh dimensional feature selection method via Kernel Canonical Correlation Analysis. The proposed approach is scale-free, model-free and works with multivariate random variables. We {established} the sure screening property of the proposed method and illustrated its capability in handling ultrahigh dimensional data on various simulated and real biological data sets. 

Future work includes a theoretical analysis of the choice of thresholding and combination of KCCA-SIS and other nonlinear regression methods for a better predictive model.

\acks{We would like to acknowledge support for this project
from the Yale World Scholars Program sponsored by the China Scholarship Council, and National Institutes of Health grants R01 GM59507 and P01 CA154295 awarded to Hongyu Zhao.}


\newpage

\appendix
\section{}\subsection{Some useful lemmas}
\begin{lem}[\citet{fukumizu2007statistical}]\label{lem:1}
Suppose $A$ and $B$ are positive self-adjoint operators on Hilbert space such that $0\leq A\leq \lambda I$ and $0\leq B\leq \lambda I$ hold for a positive constant $\lambda$. Then,
\begin{align*}
||A^{3/2} - B^{3/2}||\leq 3\lambda^{1/2}||A-B||{.}
\end{align*}
\end{lem}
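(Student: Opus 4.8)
The plan is to prove that the scalar map $t\mapsto t^{3/2}$ lifts to an operator-Lipschitz map on the order interval $[0,\lambda I]$ with Lipschitz constant $3\lambda^{1/2}$. The obvious telescoping $A^{3/2}-B^{3/2}=A(A^{1/2}-B^{1/2})+(A-B)B^{1/2}$ is tempting but doomed: the square-root map is only H\"older-$\tfrac12$ (take $A=\varepsilon^2P$, $B=0$ for a projection $P$, so $\|A^{1/2}-B^{1/2}\|=\varepsilon$ while $\|A-B\|=\varepsilon^2$), hence $\|A^{1/2}-B^{1/2}\|$ cannot be bounded linearly by $\|A-B\|$. The heart of the matter is therefore to \emph{never} form $A^{1/2}-B^{1/2}$ and instead expose the factor $A-B$ directly. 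For this I would use the integral representation of the fractional power,
\[
A^{3/2}=\frac{1}{\pi}\int_0^\infty A^2(A+sI)^{-1}\,s^{-1/2}\,ds,
\]
which is valid for $A\geq 0$ since each resolvent $(A+sI)^{-1}$ exists for $s>0$; the scalar identity behind it, $t^{3/2}=\tfrac1\pi\int_0^\infty\frac{t^2}{t+s}s^{-1/2}\,ds$, is verified by the substitution $s=tu$ together with $\int_0^\infty\frac{u^{-1/2}}{1+u}\,du=\pi$.

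Next I would subtract the two representations and simplify the integrand with the resolvent identity. Writing $R_A=(A+sI)^{-1}$, $R_B=(B+sI)^{-1}$ and using $A^2R_A=A-sI+s^2R_A$ together with $R_A-R_B=R_A(B-A)R_B$, the $sI$ terms cancel and the bracketed difference collapses to
\[
A^2R_A-B^2R_B=(A-B)-s^2R_A(A-B)R_B=R_A\bigl[A(A-B)B+s\{(A-B)B+A(A-B)\}\bigr]R_B.
\]
This is the key algebraic step: the whole difference is now \emph{linear} in $A-B$, sandwiched between resolvents, and no square root ever appears.

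From here the estimate is routine. Using $\|R_A\|,\|R_B\|\leq 1/s$, the middle (symmetric) form gives the small-$s$ bound $\|A^2R_A-B^2R_B\|\leq 2\|A-B\|$, while the right-hand (expanded) form together with $\|A\|,\|B\|\leq\lambda$ gives the large-$s$ bound $\|A^2R_A-B^2R_B\|\leq\|A-B\|\bigl(\lambda^2 s^{-2}+2\lambda s^{-1}\bigr)$. I would then split $\int_0^\infty=\int_0^{s_0}+\int_{s_0}^\infty$, apply the first bound on $(0,s_0)$ and the second on $(s_0,\infty)$, carry out the elementary $s^{-1/2}$-weighted integrals, and optimize over $s_0$ (the choice $s_0=\lambda$ being natural by scaling). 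This produces $\|A^{3/2}-B^{3/2}\|\leq \tfrac{26}{3\pi}\,\lambda^{1/2}\|A-B\|$, and since $\tfrac{26}{3\pi}<3$ the claimed constant follows with room to spare.

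The main obstacle is precisely the non-commutativity: in the commuting (scalar) case the bound holds with the sharp constant $\tfrac32\lambda^{1/2}$ by the mean value theorem, and the loss to $3$ is the price of passing to non-commuting $A,B$. The real care is in producing an \emph{integrable, uniform-in-$s$} bound on the sandwiched integrand that survives the $s^{-1/2}$ weight at both endpoints; naively each piece behaves like $\int(A-B)s^{-1/2}\,ds$ and diverges, and it is exactly the resolvent identity plus the two-regime split that make the divergent pieces recombine into a convergent, correctly-scaled integral.
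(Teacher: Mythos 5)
Your proof is correct and complete. Note first that the paper itself gives no proof of this lemma: it is stated verbatim with a citation to \citet{fukumizu2007statistical}, so your argument supplies what the paper outsources. The key steps all check out: the scalar identity $t^{3/2}=\pi^{-1}\int_0^\infty t^2(t+s)^{-1}s^{-1/2}\,ds$ (via $s=tu$ and $\int_0^\infty u^{-1/2}(1+u)^{-1}\,du=\pi$) lifts to bounded positive self-adjoint $A$ through the functional calculus, with operator-norm convergence since $\|A^2(A+sI)^{-1}\|\leq\min(\lambda,\lambda^2/s)$; the algebra $A^2R_A=A-sI+s^2R_A$ combined with $R_A-R_B=R_A(B-A)R_B$ does collapse the difference to $(A-B)-s^2R_A(A-B)R_B=R_A[A(A-B)B+s\{(A-B)B+A(A-B)\}]R_B$, so the difference is linear in $A-B$ as you claim; and with $\|R_A\|,\|R_B\|\leq 1/s$ the two-regime bounds and the split at $s_0=\lambda$ give $\pi^{-1}(4+\tfrac{2}{3}+4)\lambda^{1/2}\|A-B\|=\tfrac{26}{3\pi}\lambda^{1/2}\|A-B\|\approx 2.76\,\lambda^{1/2}\|A-B\|$, marginally sharper than the stated constant $3$. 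Your diagnosis of why the naive telescoping fails is also right: $t\mapsto t^{1/2}$ is only H\"older-$\tfrac12$ in operator norm (your $A=\varepsilon^2P$, $B=0$ witness is correct), so any route through $A^{1/2}-B^{1/2}$ cannot yield a bound linear in $\|A-B\|$, and this is precisely why the cited source likewise works with a representation in which $A-B$ appears sandwiched between resolvents. In short, this is the same family of argument as the original reference (integral representation of the fractional power plus the resolvent identity), executed correctly and self-contained.
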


\begin{lem}[ \citet{fukumizu2007statistical}]\label{lem:gret}
The cross-covariance operator $\Sigma_{YX}$ is a Hilbert-Schmidt operator, and its Hilbert-Schmidt norm is given by 
\begin{align*}
&||\Sigma_{YX}||_{\HS}^2\\
&=\E_{YX}\E_{\tilde{Y}\tilde{X}}[\langle k_x(\cdot,X) - m_X, k_y(\cdot, \tilde{X}) - m_X\rangle_{\mathcal{H}_X}\langle k_y(\cdot,Y) - m_Y, k_y(\cdot, \tilde{Y}) - m_Y\rangle_{\mathcal{H}_Y}]\\
& = ||\E_{YX}[(k_x(\cdot,X) - m_X)(k_y(\cdot,Y) - m_Y)]||^2_{\mathcal{H}_X\otimes \mathcal{H}_Y}{,}
\end{align*}
where $(\tilde{X},\tilde{Y})$ and $(X,Y)$ are independently and identically distributed with distribution $P_{XY}$.
\end{lem}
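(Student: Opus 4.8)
The plan is to realize $\Sigma_{YX}$ as the expectation of rank-one operators and then read off its Hilbert--Schmidt norm directly from the defining relation. Write the \emph{centered feature maps} $\xi_X = k_x(\cdot,X) - m_X \in \mathcal{H}_X$ and $\eta_Y = k_y(\cdot,Y) - m_Y \in \mathcal{H}_Y$, and for vectors $\xi\in\mathcal{H}_X$, $\eta\in\mathcal{H}_Y$ let $\eta \otimes \xi$ denote the rank-one operator $\mathcal{H}_X \to \mathcal{H}_Y$ acting by $f \mapsto \langle \xi, f\rangle_{\mathcal{H}_X}\,\eta$. The defining identity $\langle g, \Sigma_{YX} f\rangle_{\mathcal{H}_Y} = \E_{XY}[\langle f, \xi_X\rangle_{\mathcal{H}_X}\langle \eta_Y, g\rangle_{\mathcal{H}_Y}]$ says exactly that $\langle g, \Sigma_{YX} f\rangle_{\mathcal{H}_Y} = \E_{XY}[\langle g, (\eta_Y \otimes \xi_X) f\rangle_{\mathcal{H}_Y}]$ for all $f,g$, so that $\Sigma_{YX} = \E_{XY}[\eta_Y \otimes \xi_X]$ as a Bochner integral in the space $\HS(\mathcal{H}_X,\mathcal{H}_Y)$ of Hilbert--Schmidt operators.

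First I would establish that this Bochner integral exists, which simultaneously proves that $\Sigma_{YX}$ is Hilbert--Schmidt. Since $\|\eta_Y \otimes \xi_X\|_{\HS} = \|\xi_X\|_{\mathcal{H}_X}\|\eta_Y\|_{\mathcal{H}_Y}$, Cauchy--Schwarz gives $\E_{XY}\|\eta_Y \otimes \xi_X\|_{\HS} \le (\E_X\|\xi_X\|^2_{\mathcal{H}_X})^{1/2}(\E_Y\|\eta_Y\|^2_{\mathcal{H}_Y})^{1/2}$, and because $\E_X\|\xi_X\|^2_{\mathcal{H}_X} = \E_X k_x(X,X) - \|m_X\|^2_{\mathcal{H}_X} \le \E_X k_x(X,X)$, the integrability of $k_x(X,X)$ and $k_y(Y,Y)$ (in particular under the bounded-kernel condition (C1)) makes the integrand Bochner-integrable; the integral therefore lies in $\HS(\mathcal{H}_X,\mathcal{H}_Y)$ and equals $\Sigma_{YX}$.

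With this representation in hand, I would compute the squared norm by expanding the Hilbert--Schmidt inner product against an independent copy $(\tilde X,\tilde Y)$ of $(X,Y)$, with associated centered maps $\tilde\xi = k_x(\cdot,\tilde X) - m_X$ and $\tilde\eta = k_y(\cdot,\tilde Y) - m_Y$. Continuity and bilinearity of $\langle\cdot,\cdot\rangle_{\HS}$ together with Fubini for Bochner integrals give
\begin{align*}
\|\Sigma_{YX}\|_{\HS}^2 &= \langle \E_{XY}[\eta_Y \otimes \xi_X],\, \E_{\tilde X\tilde Y}[\tilde\eta \otimes \tilde\xi]\rangle_{\HS}\\
&= \E_{XY}\E_{\tilde X\tilde Y}\big[\langle \eta_Y \otimes \xi_X,\, \tilde\eta \otimes \tilde\xi\rangle_{\HS}\big],
\end{align*}
where the interchange is legitimate because $\E_{XY}\E_{\tilde X\tilde Y}|\langle \xi_X,\tilde\xi\rangle_{\mathcal{H}_X}\langle\eta_Y,\tilde\eta\rangle_{\mathcal{H}_Y}| \le (\E_{XY}\|\xi_X\|_{\mathcal{H}_X}\|\eta_Y\|_{\mathcal{H}_Y})^2 < \infty$ by the same integrability estimate. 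The elementary identity $\langle \eta_Y \otimes \xi_X,\, \tilde\eta \otimes \tilde\xi\rangle_{\HS} = \langle \xi_X,\tilde\xi\rangle_{\mathcal{H}_X}\langle \eta_Y,\tilde\eta\rangle_{\mathcal{H}_Y}$ for rank-one operators then yields the first displayed formula of the lemma.

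Finally, the second formula is a restatement of the first under the isometry $\HS(\mathcal{H}_X,\mathcal{H}_Y) \cong \mathcal{H}_X \otimes \mathcal{H}_Y$ that identifies $\eta \otimes \xi$ with the elementary tensor $\xi \otimes \eta$, under which the Hilbert--Schmidt inner product becomes the tensor-product inner product. Since $\|\xi_X \otimes \eta_Y\|_{\mathcal{H}_X\otimes\mathcal{H}_Y} = \|\xi_X\|_{\mathcal{H}_X}\|\eta_Y\|_{\mathcal{H}_Y}$ is integrable, $\E_{XY}[\xi_X \otimes \eta_Y]$ exists in $\mathcal{H}_X\otimes\mathcal{H}_Y$ and its squared norm equals precisely the double expectation above. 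I expect the only genuine obstacle to be the bookkeeping needed to legitimize the Bochner integral and the Fubini interchange; once $\E_X k_x(X,X)$ and $\E_Y k_y(Y,Y)$ are finite, every remaining step is a routine application of Cauchy--Schwarz and the rank-one inner-product identity.
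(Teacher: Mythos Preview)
The paper does not supply its own proof of this lemma; it is quoted from \citet{fukumizu2007statistical} and used as a black box in the subsequent arguments (e.g., in Lemma~\ref{lem:4} and Lemma~\ref{lm: properties of additive kernel}). Your proposal is a correct and standard way to establish the result: represent $\Sigma_{YX}$ as the Bochner expectation of rank-one operators $\eta_Y\otimes\xi_X$, verify integrability via $\E_Xk_x(X,X),\E_Yk_y(Y,Y)<\infty$, expand the squared Hilbert--Schmidt norm against an independent copy using the rank-one inner-product identity, and then invoke the isometry $\HS(\mathcal{H}_X,\mathcal{H}_Y)\cong\mathcal{H}_X\otimes\mathcal{H}_Y$ for the second formula. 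There is nothing to compare against in this paper, and no gap in your argument.
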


Let's consider a fixed predictor $X_r$ first. Let's denote $F_r = k_{x_r}(\cdot,X_r) - \E_{X_r}[k_{x_r}(\cdot,X_r)], G = k_y(\cdot,Y) - \E_Y[k_y(\cdot,Y)]$.  For given i.i.d data $\{(X^{(i)},Y^{(i)})\}_{i=1}^n$, $F_{ri} = k_{x_r}(\cdot, X_r^{(i)}) - \E_{X_r}[k_{x_r}(\cdot, X_r)], G_i = k_y(\cdot,Y^{(i)}) - \E_Y[k_y(\cdot,Y)]$, and $\mathcal{F}_r = \mathcal{H}_{X_r}\otimes \mathcal{H}_Y$ with kernel $k((x,y),(x',y')) {=} k_{x_r}(x,x')k_y(y,y')$. Then, $F_r,F_{r1},...,F_{rn}$ are i.i.d random elements in $\mathcal{H}_{X_r}$, and a similar fact holds for $G,G_1,...,G_n$. Notice that mean elements can be written as $m_{X_r} = \E_{X_r}k_{x_r}(\cdot,X_r), m_Y = \E_Yk_y(\cdot,Y)$ \citep{fukumizu2007statistical}.

\begin{lem}\label{lem:4pre}
Under assumptions that $\sup k_{x_r}(x,x)\leq {B} < \infty,\quad k_y(y,y)\leq {B} < \infty$, we have for $r = 1,...,p$ and $i = 1,...,n$,
\begin{align*}
||F_{ri}||_{\mh_{X_r}}\leq 2\sqrt{B}, \quad&||G_i||_{\mh_Y}\leq 2\sqrt{B},\\
||F_{ri} - F'_{ri}||_{\mh_{X_r}}\leq 2\sqrt{B}, \quad&||G_i - G'_i||_{\mh_Y}\leq 2\sqrt{B}{,}\\
||m_{X_r}||_{\mh_{X_r}}\leq \sqrt{B}, \quad&||m_Y||_{\mh_Y}\leq \sqrt{B}.
\end{align*}

\end{lem}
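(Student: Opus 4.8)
The plan is to reduce every inequality to the single reproducing-kernel identity $\|k(\cdot,x)\|_{\mh}^2=\langle k(\cdot,x),k(\cdot,x)\rangle_{\mh}=k(x,x)$, which under (C1) gives $\|k_{x_r}(\cdot,x)\|_{\mh_{X_r}}\leq\sqrt{B}$ for every $x$ and $r$, and likewise $\|k_y(\cdot,y)\|_{\mh_Y}\leq\sqrt{B}$. All six bounds then follow from this pointwise estimate combined with the triangle inequality and Jensen's inequality, so the content of the lemma is entirely elementary once this identity is in place.

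First I would handle the mean elements. Since $m_{X_r}=\E_{X_r}[k_{x_r}(\cdot,X_r)]$ is a Bochner integral and the norm is convex, Jensen's inequality yields
\[
\|m_{X_r}\|_{\mh_{X_r}}\leq \E_{X_r}\|k_{x_r}(\cdot,X_r)\|_{\mh_{X_r}}=\E_{X_r}\sqrt{k_{x_r}(X_r,X_r)}\leq\sqrt{B},
\]
and identically $\|m_Y\|_{\mh_Y}\leq\sqrt{B}$, which is the third pair. Next, writing $F_{ri}=k_{x_r}(\cdot,X_r^{(i)})-m_{X_r}$ and applying the triangle inequality gives $\|F_{ri}\|_{\mh_{X_r}}\leq\|k_{x_r}(\cdot,X_r^{(i)})\|_{\mh_{X_r}}+\|m_{X_r}\|_{\mh_{X_r}}\leq 2\sqrt{B}$, and the same argument for $G_i$ gives the first pair.

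The one step that requires a small observation is the difference bound, and this is where I would be careful. Here $F'_{ri}$ denotes the centered element built from an independent copy $X_r'^{(i)}$, so that
\[
F_{ri}-F'_{ri}=\bigl(k_{x_r}(\cdot,X_r^{(i)})-m_{X_r}\bigr)-\bigl(k_{x_r}(\cdot,X_r'^{(i)})-m_{X_r}\bigr)=k_{x_r}(\cdot,X_r^{(i)})-k_{x_r}(\cdot,X_r'^{(i)}),
\]
where the two mean elements cancel exactly. The triangle inequality then gives $\|F_{ri}-F'_{ri}\|_{\mh_{X_r}}\leq\|k_{x_r}(\cdot,X_r^{(i)})\|_{\mh_{X_r}}+\|k_{x_r}(\cdot,X_r'^{(i)})\|_{\mh_{X_r}}\leq 2\sqrt{B}$, and analogously for $G_i-G'_i$. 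The cancellation of $m_{X_r}$ is precisely what keeps the constant at $2\sqrt{B}$; a naive bound treating $F_{ri}$ and $F'_{ri}$ as two arbitrary centered terms each of norm $2\sqrt{B}$ would only yield $4\sqrt{B}$, so recognizing that the centering terms drop out is the sole nonroutine point in the proof.
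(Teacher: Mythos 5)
Your proposal is correct and follows essentially the same route as the paper: bound $\|k(\cdot,x)\|_{\mh}$ by $\sqrt{B}$ via the reproducing property, apply the triangle inequality for $F_{ri}$ and $G_i$, and observe that the mean elements cancel in $F_{ri}-F'_{ri}$ so the constant stays at $2\sqrt{B}$ --- exactly the cancellation the paper exploits. The only cosmetic difference is the mean-element bound, where you use Jensen's inequality for the Bochner integral while the paper expands $\|m_{X_r}\|^2$ as an inner product and applies Cauchy--Schwarz; both are routine one-line arguments yielding the same bound $\sqrt{B}$.
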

\begin{proof}
$$
||F_{ri}||_{\mh_{X_r}} = ||k_{x_r}(\cdot,X_r) - m_{X_r}||_{\mh_{X_r}}\leq ||k_{x_r}(\cdot,X_r)||_{\mh_{X_r}} + ||m_{X_r}||_{\mh_{X_r}}\leq \sqrt{B} + \sqrt{B}= 2\sqrt{B},
$$
where the first inequality comes from triangle inequality and the second from the definition of $B$ and $||\cdot||_{\mathcal{H}_{X_r}}$. Using the similar techniques, we have
$$
||F_{ri} - F'_{ri}||_{\mh_X} = ||k_{x_r}(\cdot,X_r) - k_{x_r}(\cdot,X_r')||_{\mh_{X_r}}\leq ||k_{x_r}(\cdot,{X_r})||_{\mh_{X_r}} + ||k_{x_r}(\cdot,X_r')||_{\mh_{X_r}}\leq 2\sqrt{B}.
$$
By Cauchy-Schwartz inequality we have
\begin{align*}
||m_{X_r}||^2_{\mh_{X_r}} &= \langle \E_{X_r}k(\cdot,X_r),\E_{X_r'}k(\cdot,X_r')\rangle\leq \E_{X_r}\E_{X'_r}k(X_r,X_r)^{1/2}k(X_r',X_r')^{1/2}\\
&\leq (\E_{X_r}k(X_r,X_r))^{1/2}(\E_{X_r'}k(X_r',X_r'))^{1/2}\leq B{.}
\end{align*}
This completes the proof.
\end{proof}

\begin{lem}\label{lem:4}
Under assumptions that $\sup k_{x_r}(x,x)\leq {B} < \infty,\quad k_y(y,y)\leq {B} < \infty$, we have for $r = 1,...,p$,
$$
\E||\hat{\Sigma}^{(n)}_{Y{X_r}} - \Sigma_{Y{X_r}} ||_{{\HS}} \leq c_1Bn^{-1/2},
\E||\hat{\Sigma}^{(n)}_{{X_r}{X_r}} - \Sigma_{{X_r}{X_r}} ||_{{\HS}} \leq c_1Bn^{-1/2},
\E||\hat{\Sigma}^{(n)}_{YY} - \Sigma_{YY} ||_{{\HS}} \leq c_1Bn^{-1/2}
$$
for some positive constant $c_1$. And
\begin{align*}
||\Sigma_{Y{X_r}}||_{\HS}\leq 4{B},\quad||\Sigma_{{X_r}{X_r}}||_{\HS}\leq 4{B},\quad||\Sigma_{YY}||_{\HS}\leq 4{B},\\
||\hat{\Sigma}_{Y{X_r}}||_{\HS}\leq 8{B},\quad||\hat{\Sigma}_{{X_r}{X_r}}||_{\HS}\leq 8{B}, \quad||\hat{\Sigma}_{YY}||_{\HS}\leq 8{B}.
\end{align*}
\end{lem}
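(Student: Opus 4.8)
The plan is to work throughout in the tensor product space $\mathcal{F}_r = \mathcal{H}_{X_r}\otimes\mathcal{H}_Y$, using Lemma \ref{lem:gret} to convert each operator Hilbert--Schmidt norm into the norm of the corresponding tensor element; since that identification is isometric, all the claimed bounds reduce to elementary norm estimates in $\mathcal{F}_r$ via the identity $\|u\otimes v\| = \|u\|\,\|v\|$ together with the uniform bounds of Lemma \ref{lem:4pre}. Under this identification $\Sigma_{YX_r} = \E[F_r\otimes G]$, and because the empirical cross-covariance operator centers by the empirical means $\bar F_r = \tfrac1n\sum_i F_{ri}$ and $\bar G = \tfrac1n\sum_i G_i$, one has
$$
\hat{\Sigma}^{(n)}_{YX_r} = \frac1n\sum_{i=1}^n (F_{ri}-\bar F_r)\otimes(G_i-\bar G) = \frac1n\sum_{i=1}^n F_{ri}\otimes G_i - \bar F_r\otimes\bar G .
$$
The arguments for $\Sigma_{X_rX_r},\Sigma_{YY}$ and their empirical versions are identical, replacing $G$ by $F_r$ (or $F_r$ by $G$) throughout.

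For the deterministic bounds, Jensen's inequality gives $\|\Sigma_{YX_r}\|_{\HS} = \|\E[F_r\otimes G]\| \le \E[\|F_r\|\,\|G\|] \le (2\sqrt{B})(2\sqrt{B}) = 4B$ by Lemma \ref{lem:4pre}, and similarly for the two covariance operators. For the empirical operator, applying the triangle inequality to the two-term decomposition above yields $\|\hat{\Sigma}^{(n)}_{YX_r}\|_{\HS} \le \tfrac1n\sum_i\|F_{ri}\|\,\|G_i\| + \|\bar F_r\|\,\|\bar G\| \le 4B + 4B = 8B$, where $\|\bar F_r\| \le \tfrac1n\sum_i\|F_{ri}\| \le 2\sqrt{B}$ and likewise for $\bar G$; this is exactly the source of the constant $8B$.

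For the expectation bound, I would split $\hat{\Sigma}^{(n)}_{YX_r} - \Sigma_{YX_r}$ into the oracle fluctuation $\tfrac1n\sum_i F_{ri}\otimes G_i - \E[F_r\otimes G]$ plus the mean-correction term $-\bar F_r\otimes\bar G$. The oracle term is an average of i.i.d. elements $\zeta_i = F_{ri}\otimes G_i$ with $\|\zeta_i\|\le 4B$; by Jensen and independence (cross terms vanishing because the $\zeta_i$ are i.i.d.), $\E\|\tfrac1n\sum_i\zeta_i - \E\zeta\| \le (\tfrac1n\E\|\zeta\|^2)^{1/2} \le 4Bn^{-1/2}$. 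For the correction term, Cauchy--Schwarz gives $\E\|\bar F_r\otimes\bar G\| \le (\E\|\bar F_r\|^2)^{1/2}(\E\|\bar G\|^2)^{1/2}$, and since the $F_{ri}$ are i.i.d. and centered, $\E\|\bar F_r\|^2 = \tfrac1n\E\|F_r\|^2 \le 4Bn^{-1}$, so this term is $O(Bn^{-1})$. Summing the two contributions gives $\E\|\hat{\Sigma}^{(n)}_{YX_r} - \Sigma_{YX_r}\|_{\HS} \le c_1 Bn^{-1/2}$ for a suitable constant $c_1$.

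The one point requiring care, which I expect to be the main (mild) obstacle, is correctly accounting for the empirical-mean centering: the naive i.i.d. variance bound applies only to the oracle operator that centers by the true means $m_{X_r},m_Y$, so the $V$-statistic correction $\bar F_r\otimes\bar G$ must be isolated and shown to be of smaller order $O(n^{-1})$, which is dominated by the $O(n^{-1/2})$ oracle term. Everything else is a mechanical application of the tensor norm identity and Lemma \ref{lem:4pre}, and notably no moment or distributional assumptions beyond the uniform kernel bound (C1) enter the proof.
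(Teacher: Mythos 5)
Your proof is correct and follows essentially the same route as the paper: both pass to the tensor space $\mathcal{F}_r$ via Lemma \ref{lem:gret}, split $\hat{\Sigma}^{(n)}_{YX_r}-\Sigma_{YX_r}$ into an i.i.d.\ oracle fluctuation plus the empirical-mean correction $\bar{F}_r\otimes\bar{G}$, bound the former at $O(Bn^{-1/2})$ by the variance-of-a-sum identity and the latter at $O(Bn^{-1})$ via $\E\|\bar{F}_r\|^2=\tfrac{1}{n}\E\|F_r\|^2$, and obtain the deterministic $4B$ and $8B$ bounds from Lemma \ref{lem:4pre} exactly as the paper does. A minor remark: your identity $\tfrac{1}{n}\sum_i(F_{ri}-\bar{F}_r)\otimes(G_i-\bar{G})=\tfrac{1}{n}\sum_i F_{ri}\otimes G_i-\bar{F}_r\otimes\bar{G}$ is in fact the correct expansion, whereas the paper's displayed coefficient $\bigl(2-\tfrac{1}{n}\bigr)$ on the correction term is a harmless algebra slip that its subsequent bound (a factor $2$ in front of $\|\bar{F}_r\otimes\bar{G}\|$) absorbs anyway.
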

\begin{proof}
Following the same argument as in \citet{fukumizu2007statistical}, Lemma \ref{lem:gret} implies
\be\label{eq:empEst}
||\hat{\Sigma}_{Y{X_r}}^{(n)}||_{\HS}^2 = \left|\left| \frac{1}{n}\sum_{i=1}^n\left( F_{ri} - \frac{1}{n}\sum_{j=1}^nF_{rj} \right)\left(G_{i} - \frac{1}{n}\sum_{j=1}^nG_j\right) \right|\right|_{\mf_r}^2{.}
\ee
{Using the argument in the proof of the same lemma,}
\begin{align*}
\langle\Sigma_{Y{X_r}},\hat{\Sigma}_{Y{X_r}}^{(n)}\rangle_{\HS} = \left\langle \E[F_rG],\frac{1}{n}\sum_{i=1}^n\left(F_{ri} - \frac{1}{n}\sum_{j=1}^n F_{rj}\right)\left(G_{i} - \frac{1}{n}\sum_{j=1}^nG_j\right) \right\rangle_{\mf_r}{.}
\end{align*}
From these equations, we have
\begin{align}
||\hat{\Sigma}^{(n)}_{Y{X_r}} &- \Sigma_{Y{X_r}} ||_{\HS}^2 = ||\Sigma_{Y{X_r}}||_{\HS}^2 - 2\langle\Sigma_{Y{X_r}},\hat{\Sigma}_{Y{X_r}}^{(n)}\rangle_{\HS} + ||\hat{\Sigma}_{Y{X_r}}^{(n)}||_{\HS}^2\\
&=\left|\left| \frac{1}{n}\sum_{i=1}^n\left( F_{ri} - \frac{1}{n}\sum_{j=1}^nF_{rj} \right)\left(G_{i} - \frac{1}{n}\sum_{j=1}^nG_j\right) -\E[F_rG]\right|\right|_{\mf_r}^2\\\label{eq:diff}
&=\left|\left| \frac{1}{n}\sum_{i=1}^nF_{ri}G_{i}  - \E[F_rG] - \left(2 - \frac{1}{n}\right)\left(\frac{1}{n}\sum_{i=1}^nF_{ri}\right)\left(\frac{1}{n}\sum_{i=1}^nG_{i}\right)\right|\right|_{\mf_r}^2,
\end{align}
which is further bounded by
$$
\left|\left| \frac{1}{n}\sum_{i=1}^nF_{ri}G_{ri}  - \E[FG]\right|\right|_{\mf_r} + 2\left|\left|\left(\frac{1}{n}\sum_{i=1}^nF_{ri}\right)\left(\frac{1}{n}\sum_{i=1}^nG_{ri}\right)\right|\right|_{\mf_r}.
$$
Let $Z_{ri} = F_{ri}G_{i} - \E[F_rG]$. Since the variance of a sum of independent random variables is equal to the sum of their variances, we obtain
\be
\E\left|\left|\frac{1}{n}\sum_{i=1}^nZ_{ri}\right|\right|^2_{\mf_r} = \frac{1}{n}\E||Z_{r1}||_{\mf_r}^2.
\ee
\begin{align*}
\E||Z_{r1}||_{\mf_r}^2 &=\E||F_{r1}G_1 - \E[F_rG]||_{\mf_r}^2\\
&\leq 2\E||F_{r1}G_1||_{\mf_r}^2 + 2||\E[F_rG]||_{\mf_r}^2 \\
&\leq 2\E||F_{r1}||_{\mh_{X_r}}^2||G_1||_{\mh_Y}^2 + 2(\E||F_rG||_{\mf_r})^2\\
&\leq 4\E||F_{r1}||_{\mh_{X_r}}^2||G_1||_{\mh_Y}^2\\
&\leq64B^2
\end{align*}
The first inequality follows from the fact that $||a-b||^2\leq 2||a||^2 + 2||b||^2$. The second inequality follows from Jenson's inequality $||\E[F_rG]||_{\mathcal{F}_r}\leq \E||F_rG||_{\mathcal{F}_r}$. The third inequality follows from the fact that $(\E||F_rG||_{\mf_r})^2 \leq \E||F_rG||^2_{\mf_r}\leq \E||F_r||_{\mathcal{H}_{X_r}}^2||G||_{\mathcal{H}_Y}^2$. The last inequality follows from lemma \ref{lem:4pre} that $||F_{ri}||_{\mathcal{H}_{X_r}}\leq 2\sqrt{B}$ and $||G_{i}||_{\mathcal{H}_Y}\leq 2\sqrt{B}$.

From the inequalities
\begin{align*}
\E\left|\left|\left(\frac{1}{n}\sum_{i=1}^nF_{ri}\right)\left(\frac{1}{n}\sum_{i=1}^nG_i\right)\right|\right|_{\mf_r} &= \E\left[\left|\left| \frac{1}{n}\sum_{i=1}^nF_{ri} \right|\right|_{\mh_{X_r}}\left|\left| \frac{1}{n}\sum_{i=1}^nG_i \right|\right|_{\mh_Y}\right]\\
&\leq\left(\E\left|\left| \frac{1}{n}\sum_{i=1}^nF_{ri} \right|\right|^2_{\mh_{X_r}}\right)^{1/2}\left(\E\left|\left| \frac{1}{n}\sum_{i=1}^nG_i \right|\right|^2_{\mh_Y}\right)^{1/2}{,}
\end{align*}
{and}
$$
\E\left|\left| \frac{1}{n}\sum_{i=1}^nF_{ri} \right|\right|^2_{\mh_{X_r}} = \frac{1}{n}\E||F_{r1}||_{\mh_{X_r}}^2\leq \frac{4B}{n}{,}
$$
{w}e have 
$$
\E||\hat{\Sigma}^{(n)}_{Y{X_r}} - \Sigma_{Y{X_r}} ||_{\HS}\leq (64B^2)^{1/2}n^{-1/2} + 4Bn^{-1}\leq c_1Bn^{-1/2}  
$$
for some constant $c_1 > 0$. Following the same argument we can {show that}
$$
\E||\hat{\Sigma}^{(n)}_{{X_r}{X_r}} - \Sigma_{{X_r}{X_r}} ||_{\HS}\leq c_1Bn^{-1/2}  \ \text{, and} \
\E||\hat{\Sigma}^{(n)}_{YY} - \Sigma_{YY} ||_{\HS}\leq c_1Bn^{-1/2} {,}
$$

To prove part 2, we have by lemma \ref{lem:gret}
$$
||\Sigma_{Y{X_r}}||_{\HS}^2 = ||\E[F_rG]||_{\mf_r}^2 \leq (\E||F_r||_{\mh_{X_r}}||G||_{\mh_Y})^2\leq 16B^2,
$$
where the first inequality follows from Jenson's inequality with respect to $||\cdot||_{\mathcal{F}_r}$ and the fact that $||F_rG||_{\mathcal{F}_r} = ||F_r||_{\mh_{X_r}}||G||_{\mh_Y}$, and the last inequality follows from lemma \ref{lem:4pre}.
\begin{align*}
||\hat{\Sigma}_{Y{X_r}}^{(n)}||_{\HS}^2 &= \left|\left| \frac{1}{n}\sum_{i=1}^n\left( F_{ri} - \frac{1}{n}\sum_{j=1}^nF_{rj} \right)\left(G_i - \frac{1}{n}\sum_{j=1}^nG_j\right) \right|\right|_{\mf_r}^2\\
&\leq  2\left|\left| \frac{1}{n}\sum_{i=1}^nF_{ri}G_i\right|\right|^2 + 2 \left|\left| \frac{1}{n^2}\sum_{i=1}^nF_{ri}\sum_{i=1}^nG_i\right|\right|^2\\
& \leq 2\frac{1}{n^2}\sum_{i,j=1}^n\langle F_{ri},F_{rj}\rangle_{\mh_{X_r}}\langle G_i,G_j\rangle_{\mh_Y} + 2\frac{1}{n^4}\sum_{i,j,k,l=1}^n\langle F_{ri},F_{rj}\rangle_{\mh_{X_r}}\langle G_k,G_l\rangle_{\mh_Y}\\
& \leq 64B^2,
\end{align*}
where the first inequality comes from the fact that $||a + b||^2\leq 2||a||^2 + 2||b||^2$ and the last inequality follows from lemma \ref{lem:4pre}.
The proof arguments are similar for $\Sigma_{{X_r}{X_r}}$, $\hat{\Sigma}_{{X_r}{X_r}}$, $\Sigma_{YY}$, and $\hat{\Sigma}_{YY}$.
\end{proof}

\begin{lem}\label{lem:5}
Let $\epsilon_n$ be a positive number such that $\epsilon_n\rightarrow 0 (n\rightarrow \infty)$. Then, for the i.i.d. sample $\{(X^{(i)},Y^{(i)})\}_{i=1}^n$, we have for $r = 1,...,p$,
$$
\E||\hat{{\mathcal{R}}}_{YX_r}^{(n)} - (\Sigma_{YY} + \epsilon_nI)^{-1/2}\Sigma_{YX_r}(\Sigma_{X_rX_r} + \epsilon_nI)^{-1/2}||\leq c_2K^2\epsilon_n^{-3/2}n^{-1/2}
$$
for some positive constant $c_2>0$.
\end{lem}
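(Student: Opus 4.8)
The plan is to compare the two regularized operators termwise. Write $\hat P=\hat\Sigma^{(n)}_{YY}+\epsilon_nI$, $P=\Sigma_{YY}+\epsilon_nI$, $\hat Q=\hat\Sigma^{(n)}_{X_rX_r}+\epsilon_nI$ and $Q=\Sigma_{X_rX_r}+\epsilon_nI$, so that the target quantity is $\|\hat P^{-1/2}\hat\Sigma^{(n)}_{YX_r}\hat Q^{-1/2}-P^{-1/2}\Sigma_{YX_r}Q^{-1/2}\|$, where $\|\cdot\|$ is the operator norm. First I would add and subtract to obtain the telescoping decomposition
\begin{align*}
\hat P^{-1/2}\hat\Sigma^{(n)}_{YX_r}\hat Q^{-1/2}-P^{-1/2}\Sigma_{YX_r}Q^{-1/2} &= (\hat P^{-1/2}-P^{-1/2})\hat\Sigma^{(n)}_{YX_r}\hat Q^{-1/2}\\
&\quad+P^{-1/2}(\hat\Sigma^{(n)}_{YX_r}-\Sigma_{YX_r})\hat Q^{-1/2}\\
&\quad+P^{-1/2}\Sigma_{YX_r}(\hat Q^{-1/2}-Q^{-1/2}),
\end{align*}
which I call $T_1,T_2,T_3$, and then bound each piece separately.

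The argument rests on three ingredients. (i) Regularization bounds: since $\hat P,P,\hat Q,Q\geq\epsilon_nI$, every factor $(\cdot+\epsilon_nI)^{-1/2}$ has norm at most $\epsilon_n^{-1/2}$. (ii) An inverse-square-root perturbation bound: from the identity $\hat P^{-1/2}-P^{-1/2}=\hat P^{-1/2}(P^{1/2}-\hat P^{1/2})P^{-1/2}$ together with the operator square-root inequality $\|P^{1/2}-\hat P^{1/2}\|\leq\tfrac12\epsilon_n^{-1/2}\|P-\hat P\|$ (valid because both operators exceed $\epsilon_nI$), I obtain $\|\hat P^{-1/2}-P^{-1/2}\|\leq\tfrac12\epsilon_n^{-3/2}\|\hat\Sigma^{(n)}_{YY}-\Sigma_{YY}\|$, and analogously for $\hat Q^{-1/2}-Q^{-1/2}$; this is the source of the $\epsilon_n^{-3/2}$ factor. (iii) A Baker-factorization pairing: applying $\Sigma_{YX}=\Sigma_{YY}^{1/2}\mathcal{R}_{YX}\Sigma_{XX}^{1/2}$ under both the empirical and the population distributions gives $\hat\Sigma^{(n)}_{YX_r}=(\hat\Sigma^{(n)}_{YY})^{1/2}W(\hat\Sigma^{(n)}_{X_rX_r})^{1/2}$ and $\Sigma_{YX_r}=\Sigma_{YY}^{1/2}V\Sigma_{X_rX_r}^{1/2}$ with contractions $\|W\|,\|V\|\leq1$. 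Using $\|(\hat\Sigma^{(n)}_{X_rX_r})^{1/2}\hat Q^{-1/2}\|\leq1$ and $\|P^{-1/2}\Sigma_{YY}^{1/2}\|\leq1$ (the regularized inverse absorbs one covariance square root), I get the \emph{joint} bounds $\|\hat\Sigma^{(n)}_{YX_r}\hat Q^{-1/2}\|\leq\|(\hat\Sigma^{(n)}_{YY})^{1/2}\|\leq\sqrt{8B}$ and $\|P^{-1/2}\Sigma_{YX_r}\|\leq\|\Sigma_{X_rX_r}^{1/2}\|\leq2\sqrt{B}$, the last norm estimates coming from Lemma \ref{lem:4} (operator norm $\leq$ Hilbert--Schmidt norm). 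Combining (i)--(iii) yields $\|T_1\|\leq\tfrac12\sqrt{8B}\,\epsilon_n^{-3/2}\|\hat\Sigma^{(n)}_{YY}-\Sigma_{YY}\|$, $\|T_2\|\leq\epsilon_n^{-1}\|\hat\Sigma^{(n)}_{YX_r}-\Sigma_{YX_r}\|$, and $\|T_3\|\leq\sqrt{B}\,\epsilon_n^{-3/2}\|\hat\Sigma^{(n)}_{X_rX_r}-\Sigma_{X_rX_r}\|$.

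Finally I would take expectations and invoke Lemma \ref{lem:4}, which gives $\E\|\hat\Sigma^{(n)}_{\cdot}-\Sigma_{\cdot}\|_{\HS}\leq c_1Bn^{-1/2}$ for each of the three covariance differences; since $\epsilon_n\leq1$ forces $\epsilon_n^{-1}\leq\epsilon_n^{-3/2}$, the $T_2$ contribution is dominated and the total is of order $\epsilon_n^{-3/2}n^{-1/2}$ with a constant that is a fixed power of the kernel bound $B$, which I absorb into $c_2K^2$. The main obstacle is securing the $\epsilon_n^{-3/2}$ rate rather than $\epsilon_n^{-2}$: the crude bound $\|T_1\|\leq\|\hat P^{-1/2}-P^{-1/2}\|\,\|\hat\Sigma^{(n)}_{YX_r}\|\,\|\hat Q^{-1/2}\|$ costs $\epsilon_n^{-3/2}\cdot B\cdot\epsilon_n^{-1/2}=\epsilon_n^{-2}$, so the decisive step is to keep $\hat\Sigma^{(n)}_{YX_r}\hat Q^{-1/2}$ together and use the Baker factorization so that the regularized inverse cancels one covariance square root (leaving a contraction), recovering the lost factor $\epsilon_n^{1/2}$; establishing the sharp inverse-square-root perturbation inequality with constant $\tfrac12\epsilon_n^{-3/2}$ is the other delicate point.
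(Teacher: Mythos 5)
Your proposal is correct, and it shares the paper's skeleton: the same three-term telescoping decomposition (your $T_1,T_2,T_3$ are exactly the paper's $M_1,M_2,M_3$ in (\ref{eq:1})), the same crude $\epsilon_n^{-1}$ bound on the middle term, the same use of Lemma \ref{lem:4} after taking expectations, and the same absorption of $\epsilon_n^{-1}\leq \epsilon_n^{-3/2}$ once $\epsilon_n\leq 1$. Where you genuinely diverge is in how the $\epsilon_n^{-3/2}$ rate (rather than $\epsilon_n^{-2}$) is extracted on the outer terms. The paper uses the identity (\ref{eq:DC}), $D^{-1/2}-C^{-1/2}=C^{-1/2}(C^{3/2}-D^{3/2})D^{-3/2}+(D-C)D^{-3/2}$, together with the special-purpose power bound of Lemma \ref{lem:1}, $\|A^{3/2}-B^{3/2}\|\leq 3\lambda^{1/2}\|A-B\|$, keeping $D^{-3/2}$ glued to $\hat{\Sigma}^{(n)}_{YX_r}(\hat{\Sigma}^{(n)}_{X_rX_r}+\epsilon_nI)^{-1/2}$ and invoking the contraction property $\|(\hat{\Sigma}^{(n)}_{YY}+\epsilon_nI)^{-1/2}\hat{\Sigma}^{(n)}_{YX_r}(\hat{\Sigma}^{(n)}_{X_rX_r}+\epsilon_nI)^{-1/2}\|\leq 1$. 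You instead prove the direct perturbation bound $\|\hat{P}^{-1/2}-P^{-1/2}\|\leq \tfrac12\epsilon_n^{-3/2}\|\hat{P}-P\|$ via $\hat{P}^{-1/2}-P^{-1/2}=\hat{P}^{-1/2}(P^{1/2}-\hat{P}^{1/2})P^{-1/2}$, and separately tame $\|\hat{\Sigma}^{(n)}_{YX_r}\hat{Q}^{-1/2}\|$ and $\|P^{-1/2}\Sigma_{YX_r}\|$ through Baker's factorization applied at the empirical and population levels. The two mechanisms are morally equivalent---the paper's contraction fact is Baker's theorem for the empirical measure in disguise---but your route is more modular, yields cleaner constants, and dispenses with the $3/2$-power lemma. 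One caution: the square-root Lipschitz inequality $\|P^{1/2}-\hat{P}^{1/2}\|\leq \tfrac12\epsilon_n^{-1/2}\|P-\hat{P}\|$ that you assert does require an argument, since scalar Lipschitz continuity of $t\mapsto\sqrt{t}$ on $[\epsilon_n,\infty)$ does not transfer to operators for free; it does hold here because $X=P^{1/2}-\hat{P}^{1/2}$ solves the Sylvester equation $P^{1/2}X+X\hat{P}^{1/2}=P-\hat{P}$ with both coefficients bounded below by $\sqrt{\epsilon_n}I$, whence $X=\int_0^\infty e^{-tP^{1/2}}(P-\hat{P})e^{-t\hat{P}^{1/2}}\,dt$ and $\|X\|\leq \|P-\hat{P}\|/(2\sqrt{\epsilon_n})$. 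Finally, you land at order $B^{3/2}\epsilon_n^{-3/2}n^{-1/2}$, the same power the paper's own proof reaches (it concludes with $c_2K^{3/2}$ despite the statement's $c_2K^2$), so absorbing the constant as you do simply inherits the paper's own notational slack between $K$ and $B$.
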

\begin{proof}
Following the same argument as in \citet{fukumizu2007statistical}, {the difference $\hat{{\mathcal{R}}}_{YX}^{(n)} - (\Sigma_{YY} + \epsilon_nI)^{-1/2}\Sigma_{YX}(\Sigma_{XX} + \epsilon_nI)^{-1/2}$ can be} decomposed as
\begin{align*}
\hat{{\mathcal{R}}}_{YX}^{(n)}& - (\Sigma_{YY} + \epsilon_nI)^{-1/2}\Sigma_{YX}(\Sigma_{XX} + \epsilon_nI)^{-1/2}\\
 = \{(&\hat{\Sigma}_{YY}^{(n)}+\epsilon_nI)^{-1/2} - (\Sigma_{YY} + \epsilon_nI)^{-1/2}\} \hat{\Sigma}_{YX}^{(n)}(\hat{\Sigma}_{XX}^{(n)} + \epsilon_nI)^{-1/2}\\
&+(\Sigma_{YY}+\epsilon_nI)^{-1/2}\{\hat{\Sigma}_{YX}^{(n)}-\Sigma_{YX}\}(\hat{\Sigma}_{XX}^{(n)}+\epsilon_nI)^{-1/2}\\
&+(\Sigma_{YY}+\epsilon_nI)^{-1/2}\Sigma_{YX}\{ (\hat{\Sigma}_{XX}^{(n)}+\epsilon_nI)^{-1/2} - (\Sigma_{XX} + \epsilon_nI)^{-1/2} \}\numberthis \label{eq:1}\\
=& M_1 + M_2 + M_3
\end{align*}
Using the equality
\be\label{eq:DC}
D^{-1/2} - C^{-1/2} = C^{-1/2}(C^{3/2} - D^{3/2})D^{-3/2} + (D-C)D^{-3/2},
\ee
we can rewrite $M_1$ as
\begin{align*}
\{(\Sigma_{YY} + \epsilon_nI)^{-1/2}((\Sigma_{YY} + \epsilon_nI)^{3/2} - (\hat{\Sigma}_{YY}^{(n)}+\epsilon_nI)^{3/2} ) + (\hat{\Sigma}_{YY}^{(n)} -  \Sigma_{YY}) \}\\
\times(\hat{\Sigma}_{YY}^{(n)}+\epsilon_nI)^{-3/2} \hat{\Sigma}_{YX}^{(n)}(\hat{\Sigma}_{XX}^{(n)} + \epsilon_nI)^{-1/2},
\end{align*}
the norm of which is further upper bounded by
$$
\frac{1}{\epsilon_n}\left\{\frac{3}{\sqrt{\epsilon_n}}\max\{||\Sigma_{YY} + \epsilon_nI||^{1/2},||\hat{\Sigma}_{YY}^{(n)} + \epsilon_nI||^{1/2}\} + 1\right\}||\hat{\Sigma}_{YY}^{(n)}-\Sigma_{YY} ||.
$$
The upper bound comes from the fact that $|| (\Sigma_{YY} + \epsilon_nI)^{-1/2} ||\leq\epsilon_n^{-1/2}, (\hat{\Sigma}_{YY}^{(n)}+\epsilon_nI)^{-1/2} \hat{\Sigma}_{YX}^{(n)}(\hat{\Sigma}_{XX}^{(n)} + \epsilon_nI)^{-1/2}\leq 1$ \citep{fukumizu2007statistical}, and Lemma \ref{lem:1}, 

Provided that $\epsilon_n\rightarrow 0$, by Lemma \ref{lem:4} we have 
$$
\E||M_1|| \leq cB^{3/2}\epsilon_n^{-3/2}n^{-1/2}
$$
for some constant $c>0$.
Similarly we have $\E||M_3|| \leq cB^{3/2}\epsilon_n^{-3/2}n^{-1/2}$. From Lemma \ref{lem:4} and the fact that $|| (\Sigma_{YY} + \epsilon_nI)^{-1/2} ||\leq\epsilon_n^{-1/2}$, we know 
$$\E||M_2||\leq c'\epsilon_n^{-1}n^{-1/2}.$$ So we have for some constant $c_2>0$,
\begin{align*}
\E||\hat{{\mathcal{R}}}_{YX}^{(n)} - (\Sigma_{YY} + \epsilon_nI)^{-1/2}\Sigma_{YX}(\Sigma_{XX} + \epsilon_nI)^{-1/2}||\leq c_2K^{3/2}\epsilon_n^{-3/2}n^{-1/2}.
\end{align*}
{We then complete the proof of the lemma.}
\end{proof}

\begin{lem}[McDiarmid's Inequality (\citet{mcdiarmid1989method})]
Let $X_1,...,X_n$ be independent random variables taking values in a set $A$, and assume that $f;A^n\rightarrow \mathbb{R}$ satisfies
\begin{align*}
\sup_{x_1,...,x_n,x_i'\in A}\left|f(x_1,...,x_n) - f(x_1,...,x_{i-1},x'_i,x_{i+1},...,x_n)\right|\leq c_i
\end{align*}
for every $1\leq i\leq n$. Then, for every $t>0$,
\begin{align*}
\P\{f(X_1,...,X_n) - \E f(X_1,...,X_n)\geq t\}\leq e^{-2t^2/{\sum_{i=1}^nc_i^2}}{.}
\end{align*}
\end{lem}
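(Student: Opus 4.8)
The plan is to prove this via the martingale method, building a Doob martingale from $f$ and applying a conditional form of Hoeffding's lemma to its increments. First I would fix the independent variables $X_1,\dots,X_n$ and let $\mathcal{F}_i = \sigma(X_1,\dots,X_i)$ denote the natural filtration, with $\mathcal{F}_0$ trivial. Define $V_i = \E[f(X_1,\dots,X_n)\mid \mathcal{F}_i]$, so that $(V_i)_{i=0}^n$ is a martingale with $V_0 = \E f(X_1,\dots,X_n)$ and $V_n = f(X_1,\dots,X_n)$. Writing $D_i = V_i - V_{i-1}$ for the martingale increments, the target deviation telescopes as $f(X_1,\dots,X_n) - \E f(X_1,\dots,X_n) = \sum_{i=1}^n D_i$, and by construction $\E[D_i \mid \mathcal{F}_{i-1}] = 0$.

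The crucial step, and the one I expect to be the main obstacle, is to show that each increment $D_i$ has range at most $c_i$ conditionally on $\mathcal{F}_{i-1}$. Here independence is essential: because $X_i,\dots,X_n$ are independent of $\mathcal{F}_{i-1}$, the conditional expectation integrates out $X_{i+1},\dots,X_n$ with the first $i$ coordinates held fixed, so that $\E[f\mid \mathcal{F}_i] = \E_{X_{i+1},\dots,X_n}[f(X_1,\dots,X_{i-1},X_i,X_{i+1},\dots,X_n)]$. Setting $L_i = \inf_{x}\E_{X_{i+1},\dots,X_n}[f(\dots,x,\dots)] - V_{i-1}$ and $U_i = \sup_{x}\E_{X_{i+1},\dots,X_n}[f(\dots,x,\dots)] - V_{i-1}$, the bounded-differences hypothesis gives, for any two values $x,x'$ of the $i$-th coordinate, the pointwise bound $|f(\dots,x,\dots) - f(\dots,x',\dots)|\le c_i$, which survives taking expectation over $X_{i+1},\dots,X_n$; hence $U_i - L_i \le c_i$ and $D_i \in [L_i, U_i]$ lies in an interval of length at most $c_i$.

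With the conditional range bound in hand, I would apply Hoeffding's lemma to the conditional moment generating function: for a centered random variable lying in an interval of length $c_i$, one has $\E[e^{sD_i}\mid \mathcal{F}_{i-1}]\le e^{s^2 c_i^2/8}$ for every $s>0$. Chaining these bounds through the tower property, by conditioning successively on $\mathcal{F}_{n-1},\dots,\mathcal{F}_0$ so that each factor $e^{s^2 c_i^2/8}$ is deterministic and factors out, yields $\E[e^{s(f - \E f)}]\le e^{s^2\sum_{i=1}^n c_i^2/8}$. Finally a Chernoff argument gives $\P\{f - \E f \ge t\}\le e^{-st}\E[e^{s(f-\E f)}]\le e^{-st + s^2\sum_i c_i^2/8}$, and optimizing over $s$ with the choice $s = 4t/\sum_i c_i^2$ produces the stated bound $e^{-2t^2/\sum_{i=1}^n c_i^2}$. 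The only genuinely delicate point is the range computation of the second paragraph; the Hoeffding and Chernoff steps are routine once independence has been used to decouple the conditional expectations.
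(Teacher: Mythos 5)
Your proof is correct, and there is nothing to compare it against within the paper: the lemma is stated in the appendix purely as an imported tool, with a citation to \citet{mcdiarmid1989method} and no proof given. Your argument --- the Doob martingale $V_i = \E[f\mid\mathcal{F}_i]$, the conditional range bound $U_i - L_i \le c_i$ obtained by integrating the bounded-differences hypothesis over $X_{i+1},\dots,X_n$, conditional Hoeffding's lemma, and the optimized Chernoff bound with $s = 4t/\sum_i c_i^2$ --- is exactly the classical bounded-differences proof from the cited reference, carried out with the delicate point (the $\mathcal{F}_{i-1}$-measurable interval endpoints and the role of independence in decoupling the conditional expectations) handled correctly.
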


\subsection{Proof of main theorems}
\begin{proof}[Proof of Theorem \ref{thm:1}]
It suffices to check the bounded difference property of $ ||\hat{\Sigma}_{YX_r} -\Sigma_{YX_r} ||_{\HS}$. 
Denote $f((X_r^{(1)},Y^{(1)}),...(X_r^{(n)},Y^{(n)})) = ||\hat{\Sigma}^{(n)}_{YX_r} - \Sigma_{YX_r} ||_{\HS}$. By equation (\ref{{eq:empEst}}) 
\begin{align*}
||\hat{\Sigma}_{YX}||_{\HS}&=  \left|\left| \frac{1}{n}\sum_{i=1}^n\left( F_i - \frac{1}{n}\sum_{j=1}^nF_j \right)\left(G_i - \frac{1}{n}\sum_{j=1}^nG_j\right) \right|\right|_{\mf} \\
&= \left|\left| \frac{1}{n}\sum_{i=1}^n F_i G_i - \frac{1}{n^2}\sum_{i,j=1}^nF_iG_j \right|\right|_{\mf},
\end{align*}
we have
\begin{align*}
&|f((X_r^{(1)},Y^{(1)}),...(X_r^{(n)},Y^{(n)})) - f((X_r^{(1)},Y^{(1)}),...,(X_r^{'(i)},Y^{'(i)}),...,(X^{(n)},Y^{(n)})) | \\
\leq &||\hat{\Sigma}^{(n)}_{YX_r}-\hat{\Sigma}^{(n)}_{YX_r'}||_{\HS}\\
= & ||\frac{1}{n}(F_{ri}G_i-F'_{ri}G'_i) - \frac{1}{n^2}\left\{\sum_{j\neq i}[(F_{ri}-F_{ri}')G_j + F_{rj}(G_i-G_i')] + (F_{ri}G_i - F_{ri}'G_i') \right\} ||_{\mf_r}\\
\leq &||\frac{1}{n}(F_{ri}G_i-F'_{ri}G'_i)||_{\mf} + \frac{1}{n^2}\left|\left|\left\{\sum_{j\neq i}[(F_{ri}-F_{ri}')G_j + F_{rj}(G_i-G_i')] + (F_{ri}G_i - F_{ri}'G_i') \right\} \right|\right|_{\mf_r}\\
\leq &\frac{8B}{n} + \frac{1}{n}(||(F_{ri}-F_{ri}')G_1|| + ||F_{r1}(G_i-G_i')||) + \frac{8B}{n^2}\\
\leq &\frac{8B}{n} + \frac{8B}{n} + \frac{8B}{n} + \frac{8B}{n^2}\\
\leq &\frac{32B}{n}
\end{align*}
The equality follows from the same argument as in proof of Lemma \ref{lem:gret}. The second inequality follows from triangle inequality, the third and fourth inequalities follows from Lemma \ref{lem:4pre}. Then by McDiarmid's inequality we complete the proof.
\end{proof}

\begin{proof}[Proof of Theorem \ref{thm:2}]
By (\ref{eq:1}), we have $||\hat{{\mathcal{R}}}_{YX_r}^{(n)} - (\Sigma_{YY} + \epsilon_nI)^{-1/2}\Sigma_{YX_r}(\Sigma_{X_rX_r} + \epsilon_nI)^{-1/2}||\leq I + II + III$, where
\begin{align*}
I &= ||\{(\hat{\Sigma}_{YY}^{(n)}+\epsilon_nI)^{-1/2} - ({\Sigma}_{YY} + \epsilon_nI)^{-1/2}\} \hat{\Sigma}_{YX_r}^{(n)}(\hat{\Sigma}_{X_rX_r}^{(n)} + \epsilon_nI)^{-1/2}||,\\
II &=||({\Sigma}_{YY}+\epsilon_nI)^{-1/2}\{\hat{\Sigma}_{YX_r}^{(n)}-{\Sigma}_{YX_r}\}(\hat{\Sigma}_{X_rX_r}^{(n)}+\epsilon_nI)^{-1/2}||,\\
III&=||({\Sigma}_{YY}+\epsilon_nI)^{-1/2}{\Sigma}_{YX_r}\{ (\hat{\Sigma}_{X_rX_r}^{(n)}+\epsilon_nI)^{-1/2} - ({\Sigma}_{X_rX_r} + \epsilon_nI)^{-1/2} ||.
\end{align*}

By (\ref{eq:DC}) we have
\begin{align*}
I = ||\{({\Sigma}_{YY} + \epsilon_nI)^{-1/2}(({\Sigma}_{YY} + \epsilon_nI)^{3/2} - (\hat{\Sigma}_{YY}^{(n)}+\epsilon_nI)^{3/2} ) + (\hat{\Sigma}_{YY}^{(n)} -  {\Sigma}_{YY}) \}\\
\times(\hat{\Sigma}_{YY}^{(n)}+\epsilon_nI)^{-3/2} \hat{\Sigma}_{YX_r}^{(n)}(\hat{\Sigma}_{X_rX_r}^{(n)} + \epsilon_nI)^{-1/2}||.
\end{align*}
From $|| ({\Sigma}_{YY} + \epsilon_nI)^{-1/2} ||\leq\epsilon_n^{-1/2}, ||(\hat{\Sigma}_{YY}^{(n)}+\epsilon_nI)^{-1/2} \hat{\Sigma}_{YX_r}^{(n)}(\hat{\Sigma}_{X_rX_r}^{(n)} + \epsilon_nI)^{-1/2}||\leq 1$, and Lemma \ref{lem:1}, 
\begin{align*}
I\leq \frac{1}{\epsilon_n}\left\{\frac{3}{\sqrt{\epsilon_n}}\max\{||{\Sigma}_{YY} + \epsilon_nI||,||\hat{\Sigma}_{YY}^{(n)} + \epsilon_nI||\} + 1\right\}||\hat{\Sigma}_{YY}^{(n)}-{\Sigma}_{YY} ||\leq cB^{1/2}\epsilon_n^{-3/2}||\hat{\Sigma}_{YY}^{(n)}-{\Sigma}_{YY} ||_{\HS}
\end{align*}
By Lemma \ref{lem:4} and Theorem \ref{thm:1}, we have
\be
\P\{||\hat{\Sigma}_{YY}^{(n)}-{\Sigma}_{YY} ||_{\HS}\geq c_1Bn^{-1/2} + t\}\leq \exp(-\frac{nt^2}{512B^2}).
\ee
Then 
\be
\P\{I\geq C_1'B^{3/2}n^{-1/2}\epsilon_n^{-3/2} + cB^{1/2}\epsilon_n^{-3/2}t\}\leq \exp(-\frac{nt^2}{512B^2}).
\ee
Using a similar argument, we have 
\be
\P\{III\geq C_1'B^{3/2}n^{-1/2}\epsilon_n^{-3/2} + cB^{1/2}\epsilon_n^{-3/2}t\}\leq \exp(-\frac{nt^2}{512B^2}).
\ee

Since $II\leq \epsilon_n^{-1}||\hat{\Sigma}_{YX_r}^{(n)}-{\Sigma}_{YX}||$,  we have $\P\{II \geq c_1Bn^{-1/2}\epsilon_n^{-1/2} + \epsilon_n^{-1/2}t\}\leq \exp(-\frac{nt^2}{512B^2})$. By the condition $\epsilon_n = o(1)$ we know that 
\be
\P\{II\geq C_1'B^{3/2}n^{-1/2}\epsilon_n^{-3/2} + c_3B^{1/2}\epsilon_n^{-3/2}t\}\leq \exp(-\frac{nt^2}{512B^2})
\ee
Let $t^* = C_1'B^{3/2}n^{-1/2}\epsilon_n^{-3/2} + c_3B^{1/2}\epsilon_n^{-3/2}t$. Then
\begin{align*}
&\P\{ ||\hat{{\mathcal{R}}}_{YX_r}^{(n)} - (\Sigma_{YY} + \epsilon_nI)^{-1/2}\Sigma_{YX_r}(\Sigma_{X_rX_r} + \epsilon_nI)^{-1/2}|| \geq 3t^* \}\\
\leq&\P\{I + II + III\geq 3t^*\}\\
\leq&\P\{I\geq t^*\} + \P\{II\geq t^*\} +\P\{III\geq t^*\}\\
\leq&3\exp(-\frac{nt^2}{512B^2})
\end{align*}
where the second inequality follows from the union bound.
Replace $3c_3B^{1/2}\epsilon_n^{-3/2}t$ by $u$, we have
\begin{align*}
\P\{ ||\hat{{\mathcal{R}}}_{YX_r}^{(n)} &- (\Sigma_{YY} + \epsilon_nI)^{-1/2}\Sigma_{YX_r}(\Sigma_{X_rX_r} + \epsilon_nI)^{-1/2}|| \\
&- C_1B^{3/2}n^{-1/2}\epsilon_n^{-3/2}\geq u\}\\
&\leq 3\exp(-\frac{\epsilon_n^3 nu^2}{512B^2})
\end{align*}
where $C_1 = 3 C_1'$.
\end{proof}

\begin{proof}[Proof of Theorem \ref{thm:3}]
First notice that $\{|\hat{\rho}_r(\epsilon_n) - \rho_r(\epsilon_n)| \geq cB^{3/2}\epsilon_n^{-3/2}n^{-\kappa}\}\subseteq \{||\hat{{\mathcal{R}}}_{YX_r}^{(n)} - (\Sigma_{YY} + \epsilon_nI)^{-1/2}\Sigma_{YX_r}(\Sigma_{X_rX_r} + \epsilon_nI)^{-1/2}|| \geq cB^{3/2}\epsilon_n^{-3/2}n^{-\kappa}\}$. Then by Theorem \ref{thm:2} we know for $0<\kappa<1/2$, there exist $C_3,C_4>0$, such that
\begin{align*}
\P\{|\hat{\rho}_r(\epsilon_n) - \rho_r{\epsilon_n}|\geq C_3B^{3/2}\epsilon_n^{-3/2}n^{-\kappa}\}\leq 3\exp(-{C_4Bn^{1-2\kappa}})
\end{align*}
Then by union bound we proved the first part of Theorem \ref{thm:3}. For the second part, we notice that if $\mathcal{M}\nsubseteq \hat{\mathcal{M}}$, then there must exist some $r\in \mathcal{M}$ such that $\hat{\rho}_r<C_3B^{3/2}\epsilon_n^{-3/2}n^{-\kappa}$. By condition (C2) we know that $\mathcal{M}\nsubseteq \hat{\mathcal{M}}$ implies $|\hat{\rho}_r(\epsilon_n) - \rho_r(\epsilon_n)|>C_3B^{3/2}\epsilon_n^{-3/2}n^{-\kappa}$ for some $r\in \mathcal{M}$. So we have
\begin{align*}
\P\{\mathcal{M}\subseteq \hat{\mathcal{M}}\} &= 1 - \P\{\mathcal{M}\nsubseteq \hat{\mathcal{M}}\}\\
&\geq 1 - \P\{\max_{r\in\mathcal{M}}|\hat{\rho}_r(\epsilon_n) - \rho_r(\epsilon_n)|>C_3B^{3/2}\epsilon_n^{-3/2}n^{-\kappa}\}\\
&\geq 1 - 3s\exp(-{C_4Bn^{1-2\kappa}})
\end{align*}
\end{proof}

\begin{lem}\label{lm: properties of additive kernel}
Suppose $m_{X_i}$ is the mean element of $\h_{X_i}$ for $i=1,\ldots,p$, and $\Sigma_{XY}$ is the covariance operator from $\h_Y$ to $\h_X$. Then we have
\begin{enumerate}
\item[(a)] $(m_{X_1},\ldots,m_{X_p})$ is the mean element of $\h_X$, denoted by $m_X$.
\item[(b)] $\| \Sigma_{XY} \|_{\HS}^2 = \sum_{i=1}^p\| \Sigma_{X_iY} \|_{\HS}^2$.
\end{enumerate}
\end{lem}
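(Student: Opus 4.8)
The plan is to treat the two parts separately, in both cases reducing everything to the defining relations in the direct-sum RKHS $\mathcal{H}_X = \oplus_{r=1}^p \mathcal{H}_{X_r}$. The single structural fact I would establish first, and reuse throughout, is that for the additive kernel $k_x(s,t) = \sum_{r=1}^p k_{x_r}(s_r,t_r)$ the feature map splits coordinatewise: writing an element of $\mathcal{H}_X$ as a tuple $f = (f_1,\ldots,f_p)$ with $f_r \in \mathcal{H}_{X_r}$, the inner product is $\langle f,g\rangle_{\mathcal{H}_X} = \sum_r \langle f_r,g_r\rangle_{\mathcal{H}_{X_r}}$, the evaluation is $f(x) = \sum_r f_r(x_r)$, and consequently $k_x(\cdot,x) = (k_{x_1}(\cdot,x_1),\ldots,k_{x_p}(\cdot,x_p))$. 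I would verify this last identity by checking the reproducing property directly: $\langle f,(k_{x_1}(\cdot,x_1),\ldots,k_{x_p}(\cdot,x_p))\rangle_{\mathcal{H}_X} = \sum_r f_r(x_r) = f(x)$.

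For part (a), I would invoke the defining relation of the mean element, $\langle f,m_X\rangle_{\mathcal{H}_X} = \E_X f(X)$ for all $f$. Expanding with the coordinatewise evaluation gives $\E_X f(X) = \sum_r \E_{X_r} f_r(X_r) = \sum_r \langle f_r,m_{X_r}\rangle_{\mathcal{H}_{X_r}} = \langle f,(m_{X_1},\ldots,m_{X_p})\rangle_{\mathcal{H}_X}$. Since this holds for every $f \in \mathcal{H}_X$, uniqueness of the Riesz representer forces $m_X = (m_{X_1},\ldots,m_{X_p})$.

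For part (b), the plan is to apply Lemma \ref{lem:gret}, which expresses $\|\Sigma_{XY}\|_{\HS}^2$ as $\|\E_{XY}[(k_x(\cdot,X)-m_X)(k_y(\cdot,Y)-m_Y)]\|_{\mathcal{H}_X\otimes\mathcal{H}_Y}^2$, and then exploit the isometric isomorphism $(\oplus_r \mathcal{H}_{X_r})\otimes \mathcal{H}_Y \cong \oplus_r (\mathcal{H}_{X_r}\otimes \mathcal{H}_Y)$. Using part (a), the centered feature element decomposes as $k_x(\cdot,X)-m_X = (F_1,\ldots,F_p)$ with $F_r = k_{x_r}(\cdot,X_r)-m_{X_r}$, so writing $G = k_y(\cdot,Y)-m_Y$ the rank-one tensor $(k_x(\cdot,X)-m_X)\otimes G$ corresponds to $(F_1\otimes G,\ldots,F_p\otimes G)$ under the isomorphism. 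Taking expectations componentwise and using that the squared norm in a direct sum is the sum of the squared component norms yields $\|\Sigma_{XY}\|_{\HS}^2 = \sum_r \|\E[F_r\otimes G]\|_{\mathcal{H}_{X_r}\otimes\mathcal{H}_Y}^2 = \sum_r \|\Sigma_{X_rY}\|_{\HS}^2$, where the final equality is Lemma \ref{lem:gret} applied in each coordinate.

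I expect the only genuine obstacle to be the bookkeeping around the tensor product of a direct sum: I must confirm that the natural map $(\oplus_r \mathcal{H}_{X_r})\otimes \mathcal{H}_Y \to \oplus_r (\mathcal{H}_{X_r}\otimes \mathcal{H}_Y)$ is a Hilbert-space isometry, so that the Pythagorean identity for norms is legitimate, and that expectation commutes with this isometry. Both follow from checking the identity on rank-one tensors and extending by linearity and continuity, since for $a = (a_1,\ldots,a_p)$ one has $\|a\otimes b\|^2 = (\sum_r \|a_r\|^2)\|b\|^2 = \sum_r \|a_r\otimes b\|^2$; everything else in the argument is a routine consequence of the defining relations established above.
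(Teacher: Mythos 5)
Your part (a) coincides with the paper's proof: both verify that the candidate tuple $(m_{X_1},\ldots,m_{X_p})$ satisfies the defining Riesz relation of the mean element via the coordinatewise evaluation $f(x)=\sum_r f_r(x_r)$, and conclude by uniqueness of the representer. For part (b) you take a correct but genuinely different route. The paper applies the \emph{first} identity in Lemma~\ref{lem:gret} --- the double-expectation form over an independent copy $(\tilde X,\tilde Y)$ --- and simply splits the direct-sum inner product $\langle k_x(\cdot,X)-m_X,\,k_x(\cdot,\tilde X)-m_X\rangle_{\h_X}=\sum_i\langle k_{x_i}(\cdot,X_i)-m_{X_i},\,k_{x_i}(\cdot,\tilde X_i)-m_{X_i}\rangle_{\h_{X_i}}$ inside the expectation; pulling the finite sum out and reapplying the lemma coordinatewise finishes the proof with no tensor-product bookkeeping whatsoever, since the argument never leaves the scalar level. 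You instead use the \emph{second} identity of the same lemma (the norm of the mean embedding in $\h_X\otimes\h_Y$) together with the isometry $(\oplus_r\h_{X_r})\otimes\h_Y\cong\oplus_r(\h_{X_r}\otimes\h_Y)$ and the Pythagorean identity in the direct sum. This is sound: the isometry does hold by your rank-one computation extended by linearity and continuity, and the Bochner expectation commutes with a bounded linear isometry, so the step $\|\Sigma_{XY}\|_{\HS}^2=\sum_r\|\E[F_r\otimes G]\|^2_{\h_{X_r}\otimes\h_Y}$ is legitimate (the harmless discrepancy between $\Sigma_{X_rY}$ and $\Sigma_{YX_r}$ is absorbed by invariance of the Hilbert--Schmidt norm under adjoints). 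What your route buys is conceptual clarity: it exhibits the identity as an orthogonality statement --- the coordinate contributions $\E[F_r\otimes G]$ lie in mutually orthogonal subspaces --- so the result extends verbatim to countable direct sums or any orthogonal decomposition of $\h_X$. What the paper's route buys is economy: the one obstacle you flag (the tensor-of-direct-sum isometry and its interaction with expectation) never arises, because only linearity of expectation and the direct-sum inner product are used. Incidentally, the paper's final displayed line contains a typo ($=\|\Sigma_{YX_i}\|_{\HS}^2$ should read $=\sum_{i=1}^p\|\Sigma_{YX_i}\|_{\HS}^2$), which your write-up does not share.
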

\begin{proof}
Assertion (a) follows from, for any $f=(f_1,\ldots,f_p)=\in\h_X$,
\begin{eqnarray*}
\langle f, m_X \rangle_{\h_X} = \sum_{i=1}^p \langle f_i, m_{X_i}\rangle_{\h_{X_i}} = \sum_{i=1}^p \E[f_i(X_i)] = \E\langle f, \kappa_X(\cdot,X) \rangle_{\h_X}.
\end{eqnarray*}
To show (b), by Lemma \ref{lem:gret},
\begin{eqnarray*}
\begin{split}
\|\Sigma_{YX}\|_{\HS}^2&=\E[\langle k_x(\cdot,X) - m_X, k_x(\cdot, \tilde{X}) - m_X\rangle_{\h_X}\langle k_y(\cdot,Y) - m_Y, k_y(\cdot, \tilde{Y}) - m_Y\rangle_{\mathcal{H}_Y}]\\
&= \E[\sum_{i=1}^p \langle k_{x_i}(\cdot,X_i) - m_{X_i}, k_{x_i}(\cdot, \tilde{X_i}) - m_{X_i}\rangle_{\h_{X_i}}\langle k_y(\cdot,Y) - m_Y, k_y(\cdot, \tilde{Y}) - m_Y\rangle_{\mathcal{H}_Y}]\\
&=\|\Sigma_{YX_i}\|_{\HS}^2.
\end{split}
\end{eqnarray*}
\end{proof}

Note \citet{fukumizu2007statistical} showed that $\Sigma_{XY}$ is Hilbert--Schmidt for any fixed $p$. Next, we extend their result to the case where $p$ grows to infinity.

\begin{lem}\label{lm: bounded covariance operator}
Suppose $\Sigma_{XX}$ is the covariance operator from $\h_X$ to $\h_X$. Then we have
\begin{eqnarray}\label{eq: lm: bounded covariance operator}
\|\Sigma_{YX}\|_{\HS}^2 = O[\lambda_{\mathrm{max}}(\Sigma_{XX})].
\end{eqnarray}
\end{lem}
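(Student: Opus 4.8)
The plan is to bound the squared Hilbert--Schmidt norm of $\Sigma_{YX}$ directly by the operator norm of $\Sigma_{XX}$, exploiting Baker's factorization and the contraction bound $\|\mathcal{R}_{YX}\|\leq 1$. The additive/direct-sum structure of $\h_X$ plays no essential role in the argument, so I would prove the inequality for a general covariance operator and then specialize; one could alternatively invoke Lemma~\ref{lm: properties of additive kernel}(b) to reduce to a sum over coordinates, but the coordinate-free route is cleaner.

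First I would write the squared HS norm as a trace, $\|\Sigma_{YX}\|_{\HS}^2 = \mathrm{tr}(\Sigma_{YX}\Sigma_{XY})$ with $\Sigma_{XY}=\Sigma_{YX}^\ast$, and substitute Baker's decomposition $\Sigma_{YX}=\Sigma_{YY}^{1/2}\mathcal{R}_{YX}\Sigma_{XX}^{1/2}$ together with its adjoint $\Sigma_{XY}=\Sigma_{XX}^{1/2}\mathcal{R}_{XY}\Sigma_{YY}^{1/2}$, where $\mathcal{R}_{XY}=\mathcal{R}_{YX}^\ast$. The middle factors collapse to
\[
\Sigma_{YX}\Sigma_{XY}=\Sigma_{YY}^{1/2}\mathcal{R}_{YX}\Sigma_{XX}\mathcal{R}_{XY}\Sigma_{YY}^{1/2}.
\]

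Next, since $\Sigma_{XX}$ is positive self-adjoint, $\Sigma_{XX}\preceq \lambda_{\mathrm{max}}(\Sigma_{XX})I$ in the Loewner order; conjugating first by $\mathcal{R}_{YX}$ and then by $\Sigma_{YY}^{1/2}$ preserves this order, so
\[
\Sigma_{YX}\Sigma_{XY}\preceq \lambda_{\mathrm{max}}(\Sigma_{XX})\,\Sigma_{YY}^{1/2}\mathcal{R}_{YX}\mathcal{R}_{XY}\Sigma_{YY}^{1/2}.
\]
Taking traces (monotone on positive operators) and using cyclicity together with $\mathcal{R}_{YX}\mathcal{R}_{XY}=\mathcal{R}_{YX}\mathcal{R}_{YX}^\ast\preceq\|\mathcal{R}_{YX}\|^2 I\preceq I$ gives
\[
\|\Sigma_{YX}\|_{\HS}^2\leq \lambda_{\mathrm{max}}(\Sigma_{XX})\,\mathrm{tr}(\mathcal{R}_{YX}\mathcal{R}_{XY}\Sigma_{YY})\leq \lambda_{\mathrm{max}}(\Sigma_{XX})\,\mathrm{tr}(\Sigma_{YY}).
\]
Finally I would bound the leftover factor by condition (C1): $\mathrm{tr}(\Sigma_{YY})=\E\|k_y(\cdot,Y)-m_Y\|_{\h_Y}^2\leq \E[k_y(Y,Y)]\leq B$, a constant independent of $p$ (recall $d$ is fixed), which yields $\|\Sigma_{YX}\|_{\HS}^2=O[\lambda_{\mathrm{max}}(\Sigma_{XX})]$.

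The main obstacle is making the operator-theoretic manipulations rigorous in the direct-sum space $\h_X=\oplus_{r=1}^p\h_{X_r}$ as $p\to\infty$: I must confirm that Baker's decomposition and the bound $\|\mathcal{R}_{YX}\|\leq 1$ remain valid in this possibly infinite-dimensional setting, and that the trace and Loewner-order steps are legitimate. In particular, $\mathrm{tr}(\Sigma_{YX}\Sigma_{XY})$ must be shown to be well defined; the displayed bound supplies this a~posteriori, since finiteness of $\lambda_{\mathrm{max}}(\Sigma_{XX})$ forces $\Sigma_{YX}$ to be Hilbert--Schmidt---precisely the extension beyond the fixed-$p$ result of \citet{fukumizu2007statistical} that the lemma asserts.
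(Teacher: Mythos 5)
Your proof is correct and follows essentially the same route as the paper: both factor $\Sigma_{YX}=\Sigma_{YY}^{1/2}\mathcal{R}_{YX}\Sigma_{XX}^{1/2}$ with $\|\mathcal{R}_{YX}\|\leq 1$ to get $\|\Sigma_{YX}\|_{\HS}^2\leq \lambda_{\mathrm{max}}(\Sigma_{XX})\,\mathrm{tr}(\Sigma_{YY})$, and then bound $\mathrm{tr}(\Sigma_{YY})=\E k_y(Y,Y)-\|m_Y\|_{\h_Y}^2\leq B$ by (C1), uniformly in $p$. Your Loewner-order and trace-cyclicity detour is simply an expanded derivation of the submultiplicative inequality $\|ABC\|_{\HS}\leq\|A\|_{\HS}\|B\|\,\|C\|$ that the paper invokes in one line, so the two arguments coincide in substance.
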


\def\tr{\mathrm{tr}}
\begin{proof}
First note that $\|\Sigma_{YX}\|_{\HS}^2$ is bounded by
\begin{eqnarray*}
\|\Sigma_{YX}\|_{\HS}^2 \leq \|\Sigma ^{1/2}_{XX} \|^2\cdot \|\Sigma^{1/2}_{YY} \|^2_{\HS}.
\end{eqnarray*}
Then it suffices to show that $\|\Sigma^{1/2}_{YY} \|^2_{\HS} = \tr(\Sigma_{YY}){<\infty}$. By definition $\tr(\Sigma_{YY})$ is equal to
\begin{eqnarray*}
\begin{split}
\tr(\Sigma_{YY}) = \E[ \| \kappa_Y(\cdot,Y)-m_Y \|^2]=\E\kappa_Y(Y,Y) - \|m_Y \|^2,
\end{split}
\end{eqnarray*}
which is finite by Lemma \ref{lem:4pre}. The proof is completed.
\end{proof}

\begin{proof}[Proof of Theorem \ref{thm:4}]
By definition, $\rho_r(\epsilon_n) = || (\Sigma_{YY} + \epsilon_nI)^{-1/2}\Sigma_{YX_r}(\Sigma_{X_rX_r}+\epsilon_nI)^{-1/2} ||\leq \epsilon_n^{-1}||\Sigma_{X_rY}||$. Define $\Sigma_{XY} = (\Sigma_{X_1Y},...,\Sigma_{X_pY})$, then
\be\label{eq:boundrho}
\sum_{r=1}^p\rho_r^2(\epsilon_n) \leq \sum_{r=1}^p\epsilon_n^{-2}||\Sigma_{X_rY}||_{\HS}^2 \leq \epsilon_n^{-2}||\Sigma_{XY}||_{\HS}^2 = O(\epsilon_n^{-2}\lambda_{\max}(\Sigma_{XX}))
\ee
The second last inequality follows from Lemma \ref{lm: properties of additive kernel}, and the last equality follows from Lemma \ref{lm: bounded covariance operator}.
This implies that the number of $\{r:\rho_r(\epsilon_n) > \eta\epsilon_n^{-1}n^{-\kappa}\}$ cannot exceed $O(n^{2\kappa}\lambda_{\max}(\Sigma_{XX}))$ for any $\eta > 0$, which implies $|\{r:\rho_r(\epsilon_n) > \eta\epsilon_n^{-3/2}n^{-\kappa}\}|\leq O(n^{2\kappa}\lambda_{\max}(\Sigma_{XX}))$ for any $\eta> 0$ because $\epsilon_n \leq 1$. Thus, on the set
$$
B_n = \left\{ \max_{1\leq r\leq p}|\hat{\rho}_r(\epsilon_n) - \rho_r(\epsilon_n) |\leq \eta \epsilon_n^{-3/2}n^{-\kappa} \right\},
$$
the number of $\{r:\hat{\rho}_r(\epsilon_n)>2\eta\epsilon_n^{-3/2}n^{-\kappa}\}$ cannot exceed the number of $\{r:\rho_r(\epsilon_n) > \eta\epsilon_n^{-3/2}n^{-\kappa}\}$, which is bounded by $O(n^{2\kappa}\lambda_{\max}(\Sigma_{XX}))$. By taking $\eta = C_3/2$, we have
$$
\P\{|\hat{\mathcal{M}}\leq O(n^{2\kappa}\lambda_{\max}(\Sigma_{XX}))|\}\geq \P B_n
$$
The conclusion follows from Theorem \ref{thm:3}.
\end{proof}

\vskip 0.2in
\section*{{*}References}
\nobibliography{April25}
\begin{enumerate}
\item\bibentry{akaho2006kernel}
\item\bibentry{aronszajn1950theory}
\item\bibentry{bach2003kernel}
\item\bibentry{baker1973joint}
\item\bibentry{balasubramanian2013ultrahigh}
\item\bibentry{candes2007dantzig}
\item\bibentry{fan2001variable}
\item\bibentry{fan2008sure}
\item\bibentry{fan2009ultrahigh}
\item\bibentry{fan2010selective}
\item\bibentry{fan2011nonparametric}
\item\bibentry{fukumizu2007kernel}
\item\bibentry{fukumizu2007statistical}
\item\bibentry{fukumizu2009kernel}
\item\bibentry{gretton2005measuring}
\item\bibentry{ji2012ups}
\item\bibentry{kang2011spatio}
\item\bibentry{lee2013general}
\item\bibentry{lee2016variable}
\item\bibentry{li2014additive}
\item\bibentry{li2012feature}
\item\bibentry{mcdiarmid1989method}
\item\bibentry{melzer2001nonlinear}
\item\bibentry{micchelli2006universal}
\item\bibentry{reed1980methods}
\item\bibentry{sejdinovic2013equivalence}
\item\bibentry{song2012feature}
\item\bibentry{tibshirani1996regression}
\item\bibentry{willsey2013coexpression}
\item\bibentry{yamanishi2003extraction}
\item\bibentry{zhang2010nearly}
\item\bibentry{zhao2006model}
\item\bibentry{zeng2012large}
\end{enumerate}

\end{document}